%% file: main.tex
\documentclass[11pt]{article}

\usepackage[margin=1in]{geometry}
\usepackage{iftex}
\usepackage[T1]{fontenc}
\ifPDFTeX\usepackage[utf8]{inputenc}\fi
\usepackage{lmodern}
\usepackage{microtype}
\usepackage{amsmath,amssymb,amsthm,mathtools}
\usepackage{booktabs,array,multirow}
\usepackage{graphicx}
\usepackage{capt-of}
\usepackage[table]{xcolor}
\definecolor{floorshade}{gray}{0.90}
\usepackage{enumitem}
\usepackage{algorithm}
\usepackage{algpseudocode}
\usepackage[numbers,sort&compress]{natbib}
\usepackage{authblk}
\usepackage[colorlinks=true,linkcolor=blue!60!black,citecolor=green!45!black,urlcolor=blue!60!black]{hyperref}
\usepackage{xurl}
\usepackage[capitalize,noabbrev]{cleveref}
\hypersetup{pdftitle={Near-Exact Computation of Independent Chip Model Placement Probabilities for Thousands of Players}, pdfauthor={Wataru Inariba}}

\input{macros}

\InputIfFileExists{tables/accuracy_summary}{}{}

\InputIfFileExists{tables/mc_summary}{}{}
\InputIfFileExists{tables/speedup_summary}{}{}
\InputIfFileExists{tables/sweep_summary}{}{}
\providecommand{\SpeedupEightMin}{\placeholder{?}}
\providecommand{\SpeedupEightMax}{\placeholder{?}}
\providecommand{\NumInstances}{\placeholder{?}}
\providecommand{\NumCrossChecked}{\placeholder{?}}
\providecommand{\NumReferencePairs}{\placeholder{?}}
\providecommand{\NumStacks}{\placeholder{?}}
\providecommand{\NumPrizeStructures}{\placeholder{?}}
\providecommand{\WorstRelV}{\placeholder{?}}
\providecommand{\WorstAbsQ}{\placeholder{?}}
\providecommand{\WorstRelQten}{\placeholder{?}}
\providecommand{\MaxN}{\placeholder{?}}
\providecommand{\WorstSumDev}{\placeholder{?}}
\providecommand{\WorstColDev}{\placeholder{?}}
\providecommand{\WorstRelF}{\placeholder{?}}
\providecommand{\McPerSampleFourK}{\placeholder{?}}
\providecommand{\PredRatioMin}{\placeholder{?}}
\providecommand{\PredRatioMax}{\placeholder{?}}

\title{Near-Exact Computation of Independent Chip Model\texorpdfstring{\\}{ }Placement Probabilities for Thousands of Players}
\author{Wataru Inariba\thanks{\texttt{wataru@gtowizard.com}}}
\affil{GTO Wizard}
\date{October 2026}

\begin{document}
\maketitle

\input{sections/00-abstract}
\input{sections/01-introduction}
\input{sections/02-background}
\input{sections/03-quadrature}
\input{sections/04-allplayers}
\input{sections/05-results}
\input{sections/06-related}
\input{sections/07-discussion}

\section*{Acknowledgments}
I would like to thank Philippe Beardsell, for entrusting me with the problem that began this work
and for tracing its connections to neighboring fields such as Plackett--Luce models; Adam Kelly,
for insightful discussions of the algorithm in its early form; Christopher Solinas, for endorsing
the submission to arXiv; and my colleagues at GTO Wizard, for the opportunity to pursue this work
and to publish it. The manuscript was prepared with the assistance of large language models;
I reviewed the resulting text and take full responsibility for its contents.

\bibliographystyle{plainnat}
\bibliography{refs}

\appendix
\input{sections/A-appendix}

\end{document}

%% file: macros.tex
\newcommand{\todo}[1]{{\color{red!70!black}\textbf{[TODO: #1]}}}

\newcommand{\DEICM}{\mbox{DE-ICM}}
\newcommand{\MPL}{\mbox{MPL}}
\newcommand{\LAQI}{\mbox{LAQI}}
\newcommand{\ICM}{ICM}

\DeclareMathOperator{\Prob}{P}
\DeclareMathOperator{\Ex}{E}
\DeclareMathOperator{\Exp}{Exp}

\newcommand{\ind}[1]{\mathbf{1}[#1]}

\newcommand{\stack}[1]{s_{#1}}            
\newcommand{\nstack}[1]{\tilde{s}_{#1}}   
\newcommand{\stacksum}{S}                 
\newcommand{\prize}[1]{\pi_{#1}}          
\newcommand{\Q}[2]{Q_{#1,#2}}             
\newcommand{\V}[1]{V_{#1}}                
\newcommand{\clock}[1]{T_{#1}}            
\newcommand{\behind}[1]{q_{#1}}           
\newcommand{\pgf}{G}                      
\newcommand{\coef}[1]{C_{#1}}             
\newcommand{\loo}[2]{C^{\setminus #1}_{#2}}      
\newcommand{\scaledloo}[2]{E^{\setminus #1}_{#2}}  
\newcommand{\loopgf}[1]{\pgf^{\setminus #1}}     
\newcommand{\mult}[1]{\alpha_{#1}}        

\newcommand{\dexp}{\varphi}               
\newcommand{\gmax}{g_{\max}}             
\newcommand{\step}{h}                     
\newcommand{\xL}{x_{\mathrm{L}}}
\newcommand{\xR}{x_{\mathrm{R}}}
\newcommand{\nodes}{M}                    
\newcommand{\ueps}{u}                     

\theoremstyle{plain}
\newtheorem{theorem}{Theorem}[section]
\newtheorem{lemma}[theorem]{Lemma}
\newtheorem{proposition}[theorem]{Proposition}
\newtheorem{corollary}[theorem]{Corollary}
\theoremstyle{definition}

\newtheorem{remark}[theorem]{Remark}

%% file: tables/accuracy_summary.tex
\newcommand{\NumInstances}{1,511}
\newcommand{\NumCrossChecked}{141}
\newcommand{\NumReferencePairs}{1,662}
\newcommand{\NumStacks}{189}
\newcommand{\NumPrizeStructures}{10}
\newcommand{\WorstRelV}{$4.0 \times 10^{-14}$}
\newcommand{\WorstAbsQ}{$1.1 \times 10^{-14}$}
\newcommand{\WorstRelQten}{$1.4 \times 10^{-7}$}
\newcommand{\MaxN}{4,000}
\newcommand{\WorstSumDev}{$2.2 \times 10^{-14}$}
\newcommand{\WorstRelF}{$6.0 \times 10^{-14}$}
\newcommand{\WorstColDev}{$4.7 \times 10^{-13}$}

%% file: tables/mc_summary.tex
\newcommand{\McPerSampleFourK}{167}

%% file: tables/speedup_summary.tex
\newcommand{\SpeedupEightMin}{4.8}
\newcommand{\SpeedupEightMax}{6.0}

%% file: tables/sweep_summary.tex
\newcommand{\PredRatioMin}{0.70}
\newcommand{\PredRatioMax}{1.09}

%% file: sections/00-abstract.tex
\begin{abstract}
The Independent Chip Model (\ICM{}) converts the chip stacks of the players remaining in a poker
tournament into finishing-place probabilities and prize equities. It is the standard model in
tournament solvers, but its definition considers all possible finishing orders, and exact
computation has been regarded as intractable for large fields. The same mathematics appears in other
fields; for instance, the \ICM{} is a Plackett--Luce ranking model with stacks as weights.
We present \DEICM{}, a deterministic algorithm that computes the placement probabilities of all $n$
players for all paid places in $O(\nodes n^{2})$ time, where the number $\nodes$ of quadrature nodes
is fixed for the admissible inputs and the target accuracy. Its numerical scheme allows the target
to be set close to the accuracy of double-precision arithmetic. The algorithm evaluates a classical
integral representation in which, conditional on a player's exponential clock, the number of players
ahead is Poisson-binomial. A double-exponential quadrature on a single grid evaluates every integral;
its truncation errors have closed-form bounds, and the step size, which controls the discretization
error, follows from the field size. A numerically stable deconvolution then recovers each player's
leave-one-out coefficients in $O(n)$ time from one shared product.
On exactly solvable instances with up to \MaxN{} players, the worst observed relative error of an
equity is \WorstRelV{} and the worst absolute error of a placement probability \WorstAbsQ{}.
The full placement matrix for 1,000 players takes 0.2 seconds on one CPU core.
\end{abstract}

%% file: sections/01-introduction.tex
\section{Introduction}
\label{sec:intro}

In a poker tournament, chips are not money. Prizes are paid by finishing position, so the
value of a stack depends on the stacks of every remaining player and on the prize structure.
The Independent Chip Model (\ICM{}) is the standard way to convert stacks into value: it defines
a distribution over finishing orders by repeatedly drawing the next finisher with probability
proportional to the remaining stacks \citep{harville1973,malmuth1987}.
In a tournament solver that approximates equilibrium strategies \citep{zinkevich2007}, the
equities under the \ICM{} serve as the payoffs at the end of a hand, either directly or as the
initial values of an iteration that refines them \citep{ganzfried2008}.

The defining recursion enumerates all finishing orders. For $n$ players, direct enumeration costs
$O(n!)$ time, and dynamic programming over subsets of players costs $O(n 2^{n})$ time and
$O(2^{n})$ space; neither is practical beyond about $30$ players.
Larger fields have relied on Monte Carlo simulation of finishing orders \citep{streib2011}
or on other approximation methods, such as the undisclosed method used by \citet{hrc2020}.
Since a tournament solver may require thousands of ICM evaluations, and small differences in equity
can affect the resulting strategies, we aim for fast, deterministic evaluation whose accuracy can be
tuned through an explicit error budget.

This paper describes \DEICM{} (\cref{alg:deicm}), which computes the matrix of placement
probabilities $\Q{i}{r}$ for every player $i$ and every paid place $r \le m$, with accuracy that can
approach that of double-precision arithmetic. It runs in $O(\nodes n^{2})$ time, where the number
$\nodes$ of quadrature nodes is derived once from the field size, stack-ratio bound and target
accuracy, and is shared across admissible instances.
The matrix rather than the equities is the natural output because the equity under any prize
structure is a linear functional of it. Its entries and partial sums, such as the probability of
finishing in the money, are also of interest in themselves.

The starting point is a classical integral representation (\cref{thm:representation}).
The \ICM{} is a Plackett--Luce model \citep{luce1959,plackett1975} with the stacks as weights, and
in the exponential-race representation of that model, players are ordered by increasing values of
independent exponential clocks with rates given by the stacks.
Given one player's clock value, each other player independently finishes ahead or behind, so the
number ahead has a Poisson-binomial distribution. Integrating these conditional probabilities
against the exponential density of that player's clock gives the placement probabilities.
This representation arises in the analysis of the move-to-front rule for self-organizing lists
\citep{flajolet1992,fill1996holst} and was recently rediscovered for propensity estimation in
learning to rank \citep{knyazev2026}.

\paragraph{Quadrature.}
General-purpose quadrature typically adapts its quadrature nodes to the relevant ranges of clock
values, which vary with the input stack distribution.
In contrast, we normalize the stacks and evaluate all integrals on a fixed grid using a
double-exponential (DE) transformation followed by the trapezoidal rule.
We assume throughout that the largest and the smallest stack differ by at most a factor of
$\rho = 10^{9}$, a conservative bound for the tournament settings of interest.
The truncation errors have closed-form bounds, and the step $\step$ follows from a prediction of the
discretization error that is checked against measurements (\cref{sec:quadrature}).

\paragraph{Leave-one-out.}
The Poisson-binomial generating function for each player excludes the player's own factor.
At each quadrature node, recomputing the product separately for every player costs $O(n^{2} m)$.
Dividing the full product by each player's linear factor costs $O(n)$ per player, or $O(n^2)$ in
total, but previous work avoided this approach because of numerical instability \citep{knyazev2026}.
Here the division is stable once its direction is chosen by comparing the player's conditional
probability to a threshold, and its error satisfies an absolute bound. Within that bound,
rounding can still leave a residual that turns into an appreciable relative error after quadrature,
so we remove it by clamping each step to the bounds that the exact recurrence satisfies (\cref{sec:allplayers}).

\paragraph{Production use.}
An earlier version of the method, with the same integral representation and quadrature on a fixed
grid based on the exp-sinh transformation, has been in production use at GTO Wizard since 2024,
inside a tournament solver for fields of up to 4,096 players. The direction-switched leave-one-out
deconvolution has been in use since then in an ICM calculator that returns the equities of all
players. A GTO Wizard blog post \citep{tombos2024} reported the existence and accuracy of this
earlier version without describing the method.
This paper adds an explicit error budget and parameter selection for a double-exponential rule
suited to exponentially decaying integrands (\cref{sec:quadrature}), a stability analysis and a
clamp for the leave-one-out recurrence (\cref{sec:allplayers}), and a validation suite at the level
of double-precision arithmetic (\cref{sec:results}), along with comparisons with marginalized
Plackett--Luce \citep{knyazev2026} and Log-Age Quadrature ICM \citep{wolters2026}
(\cref{sec:related-pl}). The parameters and the experiments reported here were developed for this
paper and are independent of the product.

\paragraph{Recent related work.}
Two recent works, \MPL{} \citep{knyazev2026} and \LAQI{} \citep{wolters2026}, arrive at
the same integral representation and present numerical frameworks for evaluating it.
They show that Plackett--Luce propensities and ICM equities, respectively, can be computed
deterministically for large fields, with much higher accuracy than sampling.
\LAQI{} cites the above GTO Wizard blog post as its starting point.
\DEICM{} differs in deriving its grid from an error budget rather than from the instance, in
computing the placement matrix at $O(n^{2})$ operations per node, and in its validation against
exact references for up to $4{,}000$ players (\cref{tab:comparison} in \cref{sec:related-pl}).

\medskip\medskip
\cref{sec:results} validates the algorithm against families of exactly solvable instances that
scale to \MaxN{} players and reports runtimes. \cref{sec:related} relates the method to earlier
work, and \cref{sec:discussion} concludes the paper.

%% file: sections/02-background.tex
\section{Background and the integral representation}
\label{sec:background}

\subsection{The Independent Chip Model}
\label{sec:icm}

A field of $n$ players has stacks $\stack{1}, \dots, \stack{n} > 0$ with total
$\stacksum = \sum_{i} \stack{i}$. Prizes $\prize{1} \ge \prize{2} \ge \dots \ge \prize{m} > 0$
are paid to the first $m \le n$ finishers, and $\prize{r} = 0$ for $r > m$. The \ICM{}
\citep{harville1973,malmuth1987} assigns to a finishing order $\sigma = (\sigma_1, \dots,
\sigma_n)$, where $\sigma_r$ is the player finishing in place $r$, the probability
\begin{equation}
  \Prob(\sigma) \;=\; \prod_{r=1}^{n} \frac{\stack{\sigma_r}}{\stacksum - \sum_{r' < r} \stack{\sigma_{r'}}} .
  \label{eq:icm}
\end{equation}
That is, the winner is drawn with probability proportional to stack, then the runner-up is
drawn among the remaining players with probability proportional to stack, and so on. This is
the Plackett--Luce distribution \citep{luce1959,plackett1975} with weights $\stack{i}$. Two
quantities are of interest: the \emph{placement probabilities}
$\Q{i}{r} = \Prob(\sigma_r = i)$, which form a doubly stochastic $n \times n$ matrix, and the
\emph{equities} $\V{i} = \sum_{r=1}^{m} \prize{r} \Q{i}{r}$. Because \cref{eq:icm} is invariant
under $\stack{i} \mapsto \lambda\,\stack{i}$ for $\lambda > 0$, only the ratios of stacks matter; we use
this freedom in \cref{sec:normalization}. \Cref{tab:notation} collects the notation.

\subsection{The exponential race}
\label{sec:race}

\begin{lemma}[Exponential race]
  \label[lemma]{lem:race}
  Let $\clock{1}, \dots, \clock{n}$ be independent with $\clock{j} \sim \Exp(\stack{j})$, and
  let $\sigma$ order the players by increasing clock, so that $\sigma_1$ has the smallest clock.
  Then $\sigma$ has the \ICM{} distribution given by \cref{eq:icm}.
\end{lemma}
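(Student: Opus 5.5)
The plan is to compute $\Prob(\clock{\sigma_1} < \clock{\sigma_2} < \dots < \clock{\sigma_n})$ for a fixed permutation $\sigma$ directly, and to show that it factors into the product in \cref{eq:icm}. Ties among the clocks occur with probability zero because the joint law is absolutely continuous, so $\sigma$ is almost surely well defined. The events $\{\clock{\sigma_1} < \dots < \clock{\sigma_n}\}$ over all permutations then partition the sample space up to a null set.

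The cleanest route is induction on $n$ using two standard facts about independent exponentials. The first fact concerns the minimum. For independent $\clock{j} \sim \Exp(\stack{j})$, $j \in A$, the minimum $\min_{j \in A} \clock{j}$ is $\Exp(\sum_{j\in A} \stack{j})$, and the index attaining it equals $i$ with probability $\stack{i}/\sum_{j \in A}\stack{j}$. Moreover, this index is independent of the minimum's value. Both claims follow from the single computation
\[
\Prob\Bigl(\clock{i} > t,\ \clock{i} < \clock{j}\ \forall j\ne i\Bigr)=\int_t^\infty \stack{i} e^{-\stack{i} x}\prod_{j\ne i} e^{-\stack{j} x}\,dx=\frac{\stack{i}}{\sum_{j\in A}\stack{j}}\,e^{-\sum_{j\in A}\stack{j}\, t}.
\]
The second fact is memorylessness. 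Conditional on the event that $\sigma_1 = i$ and that the minimum equals $t$, the residuals $\clock{j} - t$ for $j \ne i$ are independent with $\clock{j} - t \sim \Exp(\stack{j})$. To verify this, I would compute the joint density of $(\clock{i}, (\clock{j})_{j\ne i})$ restricted to $\{\clock{j} > \clock{i}\}$ and observe that it factors after the shift $x_j \mapsto x_j - t$.

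With these two facts in hand, the first finisher is $\sigma_1 = i$ with probability $\stack{i}/\stacksum$, which matches the first factor of \cref{eq:icm}. Conditionally on this event, the ordering of the remaining players is determined by the residual clocks. These residuals form an exponential race on the set $\{1,\dots,n\}\setminus\{i\}$ with the same rates. By the induction hypothesis, their order has the \ICM{} distribution with total stack $\stacksum - \stack{i}$. Multiplying the factors yields \cref{eq:icm}. The base case $n=1$ is trivial.

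The only step requiring care is the conditional independence and memorylessness claim. It is needed so that the induction applies to the residual race rather than to the original, conditioned clocks. I would state this claim as a joint density computation rather than appealing informally to memorylessness. As an alternative that avoids conditioning altogether, one could integrate the joint density over the ordered region $0<t_1<\dots<t_n$ iteratively from the innermost variable outward. Each integration produces one factor $\stack{\sigma_r}/(\stacksum-\sum_{r'<r}\stack{\sigma_{r'}})$, and this gives a direct proof of \cref{eq:icm}.
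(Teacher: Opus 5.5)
Your proposal is correct and is essentially the paper's proof. The minimum is attained by player $i$ with probability $\stack{i}/\stacksum$, memorylessness turns the residual clocks $\clock{j}-\clock{i}$, $j\neq i$, into a fresh exponential race on the remaining players, and induction yields \cref{eq:icm}; your joint-density computation makes explicit the conditioning step the paper states informally, and your alternative of integrating over the ordered region $0<t_1<\dots<t_n$ is also valid.
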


\begin{proof}
  The minimum of independent exponentials is attained by player $i$ with probability
  $\stack{i} / \stacksum$. Conditional on player $i$ having the minimum at time $t$, the residual
  times $\clock{j} - t$, $j \neq i$, are again independent exponentials with rates $\stack{j}$
  by the memoryless property. Induction over the remaining players gives \cref{eq:icm}.
\end{proof}

The lemma gives a simple way to sample an \ICM{} finishing order: draw the clocks and sort them
\citep{streib2011}. It also corresponds to the Gumbel construction of the Plackett--Luce
distribution \citep{yellott1977,kool2019}, since $-\log \clock{j}$ is a Gumbel variable with
location $\log \stack{j}$, and to the continuous-time description of the move-to-front rule
(\cref{sec:related-mtf}).

\begin{table}[t]
  \centering
  \footnotesize
  \setlength{\tabcolsep}{4pt}
  \caption{Notation.}
  \label{tab:notation}
  \begin{tabular}{@{}lll@{}}
    \toprule
    Symbol & Meaning & Notes \\
    \midrule
    $n$, $m$ & players, paid places ($m \le n$) & \\
    $\stack{i}$, $\nstack{i}$, $c$ & stack of player $i$, normalized stack, headroom & $\nstack{i} = \stack{i} / (c\, s_{\max})$, $c = 20$ \\
    $\rho$ & stack-ratio bound, $s_{\max} / s_{\min} \le \rho$ & $\rho = 10^{9}$ \\
    $\prize{r}$ & prize for place $r$ & non-increasing, $\prize{r} = 0$ for $r > m$ \\
    $\Q{i}{r}$, $\V{i}$ & placement probability, equity & $\V{i} = \sum_r \prize{r} \Q{i}{r}$ \\
    $\clock{i} \sim \Exp(\stack{i})$ & exponential clock of player $i$ & smaller clock, earlier finish \\
    $t$ & value of the conditioned clock & \\
    $\behind{j}(t) = e^{-\stack{j} t}$ & $\Prob(\clock{j} > t)$ & $j$ finishes behind the conditioned player \\
    $\coef{r}(t)$ & coefficient of $z^{r}$ in $\prod_{j} (\behind{j} + (1-\behind{j}) z)$ & \\
    $\loo{i}{r}(t)$ & coefficient of $z^{r}$ in the product over $j \ne i$ & leave-one-out; $r$ players ahead, place $r+1$ \\
    $\dexp(x) = \exp(x - e^{-x})$ & exp-exp transformation, $t = \dexp(x)$ & \cref{sec:rule} \\
    $\step$, $[\xL, \xR)$, $\nodes$ & grid step, window, number of quadrature nodes & \\
    $\ueps$ & unit roundoff & $2^{-53} \approx 1.11 \times 10^{-16}$ in double precision \\
    \bottomrule
  \end{tabular}
\end{table}

\subsection{Integral representation of placement probabilities}
\label{sec:representation}

Fix a player $i$ and condition on $\clock{i} = t$. Player $j \neq i$ finishes ahead of $i$
exactly when $\clock{j} < t$, an event of probability $1 - \behind{j}(t)$ with $\behind{j}(t) =
e^{-\stack{j} t}$, and these events are independent across $j$. The number of players finishing
ahead of $i$ is therefore Poisson-binomial with success probabilities $1 - \behind{j}(t)$, $j
\neq i$, whose probability generating function is
\begin{equation}
  \loopgf{i}(z; t) \;=\; \prod_{j \neq i} \bigl( \behind{j}(t) + (1 - \behind{j}(t))\, z \bigr)
  \;=\; \sum_{r=0}^{n-1} \loo{i}{r}(t)\, z^{r} .
    \label{eq:loo-pgf}
\end{equation}
The coefficient $\loo{i}{r}$ is the probability that exactly $r$ players finish ahead of $i$, so
it corresponds to place $r + 1$; this offset of one between the index of a coefficient and a
place is kept throughout. Expanding the product gives all conditional placement probabilities of
player $i$ at fixed $t$ in $O(n^{2})$ operations, without enumerating finishing orders.

\begin{theorem}[Integral representation]
  \label{thm:representation}
  For every player $i$ and every place $r+1$ with $0 \le r \le n-1$,
  \begin{equation}
    \Q{i}{r+1} \;=\; \int_{0}^{\infty} \stack{i}\, e^{-\stack{i} t}\; \loo{i}{r}(t)\; \mathrm{d}t .
    \label{eq:representation}
  \end{equation}
\end{theorem}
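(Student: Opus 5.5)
By \cref{lem:race}, the \ICM{} finishing order has the same distribution as the order of independent clocks $\clock{j} \sim \Exp(\stack{j})$ sorted in increasing order. It therefore suffices to compute $\Prob(\sigma_{r+1} = i)$ in the race model. Ties among clocks occur with probability zero, so almost surely player $i$ finishes in place $r+1$ exactly when exactly $r$ of the other clocks are smaller than $\clock{i}$. Writing $N_i = \sum_{j \neq i} \ind{\clock{j} < \clock{i}}$, we get $\Q{i}{r+1} = \Prob(N_i = r)$.

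Next we condition on $\clock{i}$. Since $\clock{i}$ is independent of $(\clock{j})_{j\neq i}$ and has density $\stack{i} e^{-\stack{i} t}$ on $(0,\infty)$, Fubini's theorem (or the disintegration of the product measure) gives
\[
\Prob(N_i = r) = \int_0^\infty \stack{i} e^{-\stack{i} t}\, \Prob\Bigl(\textstyle\sum_{j\neq i} \ind{\clock{j} < t} = r\Bigr)\, \mathrm{d}t .
\]
The integrand is a measurable function of $t$ bounded by $\stack{i} e^{-\stack{i} t}$, so the integral is finite and the interchange is justified.

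For fixed $t$, the indicators $\ind{\clock{j} < t}$, $j \neq i$, are independent Bernoulli variables with success probabilities $1 - e^{-\stack{j} t} = 1 - \behind{j}(t)$. The probability generating function of their sum is the product of the individual generating functions $\behind{j}(t) + (1-\behind{j}(t)) z$, which is exactly $\loopgf{i}(z;t)$ in \cref{eq:loo-pgf}. Hence the conditional probability equals the coefficient $\loo{i}{r}(t)$. Substituting this into the integral gives \cref{eq:representation}.

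No step is a real obstacle. The only points needing care are the null event of ties, so that "place $r+1$" is well defined and equivalent to "exactly $r$ players ahead", and the use of independence to justify conditioning on $\clock{i} = t$ without altering the distribution of the other clocks.
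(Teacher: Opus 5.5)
Your proof is correct and follows the paper's own argument: reduce to counting the clocks smaller than $\clock{i}$ via \cref{lem:race}, condition on $\clock{i} = t$ to get a Poisson-binomial count with generating function \cref{eq:loo-pgf}, and integrate against the density of $\clock{i}$. Your explicit treatment of ties and of the Fubini interchange only makes rigorous what the paper leaves implicit.
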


\begin{proof}
  By \cref{lem:race}, $i$ finishes in place $r+1$ exactly when $r$ clocks are smaller than
  $\clock{i}$. Conditional on $\clock{i} = t$, the number of smaller clocks is Poisson-binomial
  with the generating function in \cref{eq:loo-pgf}, so the probability that it equals $r$ is
  $\loo{i}{r}(t)$. Integrating against the density $\stack{i} e^{-\stack{i} t}$ of $\clock{i}$
  gives \cref{eq:representation}.
\end{proof}

\begin{corollary}
  \label[corollary]{cor:equity}
  $\V{i} = \int_{0}^{\infty} \stack{i} e^{-\stack{i} t} \sum_{r=0}^{m-1} \prize{r+1}
  \loo{i}{r}(t)\, \mathrm{d}t$.
\end{corollary}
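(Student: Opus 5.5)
The corollary follows directly from \cref{thm:representation}, so the plan is to substitute that theorem into the definition of the equity. By the definition in \cref{sec:icm}, $\V{i} = \sum_{r=1}^{m} \prize{r} \Q{i}{r}$. First I will reindex this sum by $r \mapsto r+1$, which gives $\V{i} = \sum_{r=0}^{m-1} \prize{r+1} \Q{i}{r+1}$. This matches the offset of one between coefficient index and place that the paper keeps throughout. Since $m \le n$, every index satisfies $0 \le r \le m-1 \le n-1$, so \cref{eq:representation} applies to each term. Substituting it gives
\begin{equation*}
  \V{i} \;=\; \sum_{r=0}^{m-1} \prize{r+1} \int_{0}^{\infty} \stack{i}\, e^{-\stack{i} t}\, \loo{i}{r}(t)\, \mathrm{d}t .
\end{equation*}

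The second step is to move the finite sum inside the integral. Each integrand is nonnegative and measurable on $[0,\infty)$; in fact it is continuous, being a polynomial in the $\behind{j}(t)$ times an exponential. Each integral is finite, because it equals the probability $\Q{i}{r+1} \le 1$. Integration is linear, so a finite linear combination of integrable functions can be integrated termwise, and no convergence theorem is needed beyond this. The only point to check is that the combination is legitimate: the weights $\prize{r+1}$ are finite, and the sum has at most $m$ terms. One could alternatively argue by Tonelli, since all terms are nonnegative, but linearity already suffices.

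There is no real obstacle here. The only care required is bookkeeping: the upper limit of the sum must be $m-1$ and not $m$, because $\loo{i}{r}$ corresponds to place $r+1$ and $\prize{r} = 0$ for $r > m$. Equivalently, one could sum over all $0 \le r \le n-1$ and observe that the terms with $r \ge m$ vanish. Writing the sum only up to $m-1$ reproduces exactly the stated form of \cref{cor:equity}.
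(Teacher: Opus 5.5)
Your proposal is correct and follows the paper's route. The paper gives no separate proof and treats the corollary as an immediate consequence of \cref{thm:representation}: substitute \cref{eq:representation} into $\V{i} = \sum_{r=1}^{m} \prize{r} \Q{i}{r}$ with the index shifted by one, then exchange the finite sum and the integral by linearity, as you do.
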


%% file: sections/03-quadrature.tex
\section{Quadrature on a fixed double-exponential grid}
\label{sec:quadrature}

The goal of this section is a single set of quadrature nodes $t_0, \dots, t_{\nodes-1}$ and weights
$w_0, \dots, w_{\nodes-1}$ that evaluates every integral of \cref{thm:representation}, for every
admissible instance, to a target accuracy, so that the work per node can be shared across players.

\subsection{Admissible inputs and normalization}
\label{sec:normalization}

We admit stack vectors whose largest and smallest stacks differ by at most a fixed ratio,
\begin{equation}
  s_{\max} / s_{\min} \;\le\; \rho, \qquad \rho = 10^{9},
  \label{eq:ratio-bound}
\end{equation}
where $s_{\max} = \max_i \stack{i}$ and $s_{\min} = \min_i \stack{i}$. Each factor of the generating
function in \cref{eq:loo-pgf} changes from $1$ to $z$ when $\stack{j} t$ is of order one.
On the time axis, the largest stack therefore makes its transition first, near $t = 1 /
s_{\max}$, and the smallest stack last, near $t = 1 / s_{\min}$. Because the \ICM{} is scale
invariant, we normalize by the largest stack, with a \emph{headroom} $c$:
\begin{equation}
  \nstack{i} = \frac{\stack{i}}{c\, s_{\max}}, \quad c = 20, \qquad\text{so that}\qquad
  \frac{1}{c\rho} \le \nstack{i} \le \frac{1}{c} \quad\text{and}\quad
  \widetilde{S} := \sum_i \nstack{i} \le \frac{n}{c} .
  \label{eq:normalization}
\end{equation}
The largest stack then transitions near $t = c$ regardless of the instance, and the smallest at most
$\log \rho$ further to the right on the $\log t$ axis.
The headroom $c$ will be explained in the next subsection.

\subsection{The exp-exp rule}
\label{sec:rule}

For an integral over $[0, \infty)$ we substitute
\begin{equation}
  t = \dexp(x) = \exp\bigl( x - e^{-x} \bigr), \qquad
  \int_{0}^{\infty} f(t)\, \mathrm{d}t
  \;=\; \int_{-\infty}^{\infty} f(\dexp(x))\, \dexp'(x)\, \mathrm{d}x,
  \qquad \dexp'(x) = (1 + e^{-x})\, \dexp(x),
  \label{eq:expexp}
\end{equation}
the double-exponential transformation for integrands that decay exponentially at infinity \citep{takahasi1974,mori2001}.
We call this classical rule exp-exp. The map has two branches.
For $x \gg 0$, $t \approx e^{x}$, so the lattice $x \in \step \mathbb{Z}$ is uniform in $\log t$;
for $x \ll 0$, $t \approx \exp(-e^{-x})$ tends to zero double-exponentially.
We apply the trapezoidal rule with step $\step$ on the lattice, truncated to a window $[\xL, \xR)$:
\begin{equation}
  \int_{0}^{\infty} f(t)\, \mathrm{d}t \;\approx\; \sum_{\ell = 0}^{\nodes - 1} w_\ell\, f(t_\ell),
  \qquad t_\ell = \dexp(\xL + \ell \step), \quad w_\ell = \step\, \dexp'(\xL + \ell \step),
  \quad \nodes = (\xR - \xL) / \step .
  \label{eq:rule}
\end{equation}
The nodes and weights depend only on $\step$ and the window, not on the input, so at node $\ell$
the $n$ numbers $\behind{j}(t_\ell) = e^{-\nstack{j} t_\ell}$ are computed once and serve all $n
\times m$ integrands (\cref{sec:allplayers}).

\begin{figure}[b]
  \centering
  \IfFileExists{figures/integrand.pdf}{\includegraphics[width=0.67\linewidth]{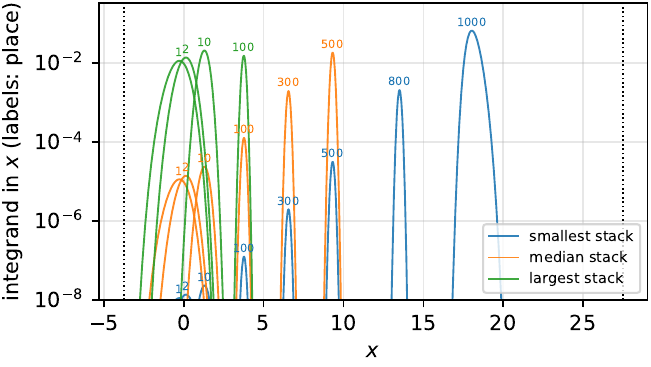}}{\todo{figure pending}}
  \caption{Transformed integrands $\nstack{i} e^{-\nstack{i} t} \loo{i}{r}(t)\, \dexp'(x)$ of
  \cref{eq:representation} in the variable $x$ of the exp-exp map, for the smallest, the median
  and the largest stack of a $1{,}000$-player field with a geometric stack progression of ratio
  $10^{6}$, for a selection of places, under the normalization of \cref{eq:normalization}.
  Labels are finishing places (1 denotes first place). Dotted lines mark the window of \cref{sec:errorbudget}.}
  \label{fig:integrand}
\end{figure}

Two features of the map make the lattice fit these integrands. On the logarithmic branch the
lattice is uniform in $\log t$, and in that variable every integrand is a bump whose shape does
not depend on the scale of the stacks (\cref{fig:integrand}). Approximating the narrow bumps by
Gaussians predicts a trapezoidal-rule error that decays as $e^{-\kappa / (n \step^{2})}$, with a
constant $\kappa$ derived in \cref{sec:errorbudget}.
On the double-exponential branch the map brings the tail $t \to 0$ within a few units of $x$:
the left end of the window can sit at $\xL \approx -3.7$, whereas a lattice uniform in $\log t$
would need $\xL \approx -42$ for the same truncation error (\cref{sec:errorbudget}).
The rules of \MPL{} and \LAQI{} also place their nodes in a logarithmic variable; they differ from
the lattice at $t \to 0$ and in the spacing of the nodes (\cref{sec:ablations}).

\paragraph{Compression and headroom.}
Near the junction of the two branches the lattice is compressed: its local spacing in $\log t$
is $\step\, \mathrm{d} \log t / \mathrm{d}x = \step\, (1 + e^{-x})$, a factor $1.57$ where $t = 1$
($x \approx 0.57$) and $1.05$ where $t = 20$ ($x \approx 3.0$). A bump that lies on the compressed
part is narrower in $x$ than in $\log t$ by that factor and would need a correspondingly finer
step. The headroom of \cref{eq:normalization} moves the transitions away from this part: with
$\nstack{\max} = 1/20$ the largest stack transitions at $t = 20$, and every other transition lies
further to the right, on the logarithmic branch. What remains on the compressed branch are the
integrands for the early places, whose bumps are broad (\cref{fig:integrand}). The price of headroom
is window length, one unit for each factor $e$, and the choice of $c$ balances two costs. The
narrowest bumps lie at $t \approx b^{*} c$ (\cref{lem:width}), where $b^{*} \approx 1.6$ and the
compression is about $1 + 1 / (b^{*} c)$. The number of nodes grows with this factor, through the
step, and with the window length $W \approx 31$; their product is smallest at $b^{*} c \approx W$,
that is, at $c \approx 20$. We take this value for all results of the paper.

\paragraph{The exp-sinh rule.}
The other double-exponential transformation for $[0, \infty)$, $t = \exp(\tfrac{\pi}{2} \sinh x)$,
is the one in common use, designed for integrands that decay
algebraically \citep{takahasi1974,mori2001}. Its resolution varies across the field. Under the
natural normalization by the midrange $\sqrt{s_{\max} s_{\min}}$, the transitions lie
symmetrically about $t = 1$, and the map compresses $\log t$ by $\tfrac{\pi}{2} \cosh x$: a
factor $1.6$ at $t = 1$ and $10.5$ at the extreme stacks of a field of ratio $\rho$. The exp-exp map
has no such gradient on its logarithmic branch, and we use it throughout the paper.

\subsection{Error budget}
\label{sec:errorbudget}

The error of a computed placement probability has three sources: the truncation of the integral to
a finite window (\cref{prop:truncation}), the discretization of the integral by the trapezoidal rule
(\cref{lem:width,eq:kappa}), and rounding in floating-point arithmetic
(\cref{rem:rounding}).

\begin{proposition}[Truncation]
  \label[proposition]{prop:truncation}
  Let $t_{\mathrm{L}} = \dexp(\xL)$ and $t_{\mathrm{R}} = \dexp(\xR)$. Omitting $[0,
  t_{\mathrm{L}})$ from \cref{eq:representation} changes $\Q{i}{1}$ by at most $\nstack{i}
  t_{\mathrm{L}}$, which is a relative error of at most $\widetilde{S}\, t_{\mathrm{L}}$, and
  changes every $\Q{i}{r}$ with $r \ge 2$ by at most $\nstack{i} \widetilde{S}\,
  t_{\mathrm{L}}^{2} / 2$. Omitting $[t_{\mathrm{R}}, \infty)$ changes every $\Q{i}{r}$ by at
  most $e^{-\nstack{i} t_{\mathrm{R}}}$, and every equity of player $i$ by a relative error of
  at most $2\, e^{-\nstack{i} t_{\mathrm{R}}}$ when $e^{-\nstack{i} t_{\mathrm{R}}} \le 1/2$.
\end{proposition}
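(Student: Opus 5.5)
We work throughout with the normalized stacks $\nstack{j}$ and write $\behind{j}(t) = e^{-\nstack{j} t}$. The integrand of \cref{eq:representation} is $f_{i,r}(t) = \nstack{i} e^{-\nstack{i} t} \loo{i}{r-1}(t)$, so the omitted mass on any interval $I$ is $\int_I f_{i,r}$. Each bound then follows from a pointwise bound on $\loo{i}{r-1}(t)$, which is a Poisson-binomial probability. We also use that $\loo{i}{r-1} \le 1$ and that $\sum_r \loo{i}{r-1} = 1$.

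\emph{Left tail.} For $r = 1$ we bound $\loo{i}{0}(t) = \prod_{j\ne i}\behind{j}(t) \le 1$. This gives $\int_0^{t_{\mathrm{L}}} \nstack{i} e^{-\nstack{i} t}\,\mathrm{d}t \le \nstack{i} t_{\mathrm{L}}$.

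For the relative statement we need a lower bound on $\Q{i}{1} = \nstack{i}/\widetilde{S}$, which holds by \cref{eq:icm}. The relative error is then at most $\nstack{i} t_{\mathrm{L}}/(\nstack{i}/\widetilde{S}) = \widetilde{S} t_{\mathrm{L}}$.

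For $r \ge 2$, at least one player must be ahead. Hence $\loo{i}{r-1}(t) \le \Prob(\text{some } j \ne i \text{ has } \clock{j} < t) \le \sum_{j\ne i} (1-\behind{j}(t)) \le \sum_j \nstack{j} t \le \widetilde{S} t$, using $1 - e^{-x} \le x$. Integrating gives
\[
\int_0^{t_{\mathrm{L}}} \nstack{i} e^{-\nstack{i} t}\, \widetilde{S} t \,\mathrm{d}t \le \nstack{i}\widetilde{S} t_{\mathrm{L}}^2/2,
\]
where we drop $e^{-\nstack{i} t} \le 1$.

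\emph{Right tail.} With $\loo{i}{r-1} \le 1$ the omitted mass is at most $\int_{t_{\mathrm{R}}}^\infty \nstack{i} e^{-\nstack{i} t}\,\mathrm{d}t = e^{-\nstack{i} t_{\mathrm{R}}}$, for every $r$.

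For the equity, summing over $r$ with weights $\prize{r}$ gives an omitted amount of $\int_{t_{\mathrm{R}}}^\infty \nstack{i} e^{-\nstack{i} t} \sum_{r\le m} \prize{r}\loo{i}{r-1}(t)\,\mathrm{d}t$. This is at most $\prize{1}\,\varepsilon$ with $\varepsilon = e^{-\nstack{i} t_{\mathrm{R}}}$, because $\sum_r \loo{i}{r-1} = 1$.

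We compare this with $\V{i}$ through a lower bound $\V{i} \ge \prize{1}\,\varepsilon$ times a factor of order $1/2$. The cleanest route is as follows.
\begin{itemize}
\item Since the prizes are non-increasing and $\loo{i}{\cdot}(t)$ shifts stochastically toward larger indices as $t$ grows (each $\behind{j}$ decreases), the function $g(t) = \sum_r \prize{r}\loo{i}{r-1}(t)$ is non-increasing in $t$.
\item Hence the tail average of $g$ is at most its head average. By a Chebyshev-type rearrangement, $\int_{t_{\mathrm{R}}}^\infty \nstack{i}e^{-\nstack{i}t} g \le \varepsilon \V{i}$.
\item This gives a relative error of at most $\varepsilon \le 2\varepsilon$ directly.
\end{itemize}

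The factor $2$ and the condition $\varepsilon \le 1/2$ in the statement suggest that the author instead relates the omitted amount to the retained integral. Under that reading the relative error is measured against the truncated value, and $\varepsilon/(1-\varepsilon) \le 2\varepsilon$ when $\varepsilon \le 1/2$. We will present the monotonicity argument and then convert it: writing $A = \int_{t_{\mathrm{R}}}^\infty$ and $B = \int_0^{t_{\mathrm{R}}}$, we have $A \le \varepsilon(A+B)$, so $A \le \varepsilon B/(1-\varepsilon) \le 2\varepsilon B$. This covers either normalization.

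The main obstacle is justifying that $g$ is monotone, which amounts to stochastic monotonicity of the Poisson-binomial count in $t$. We will handle it by coupling: the count $\#\{j : \clock{j} < t\}$ is pathwise non-decreasing in $t$, and $\prize{\cdot}$ is non-increasing. The correlation inequality $\Ex[g(T)\mathbf{1}[T\ge t_{\mathrm{R}}]] \le \Prob(T \ge t_{\mathrm{R}})\Ex[g(T)]$, for non-increasing $g$ and $T = \clock{i}$, then follows from Chebyshev's association inequality.
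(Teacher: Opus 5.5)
Your proof is correct and is essentially the paper's. The place bounds use the same ingredients ($0\le\loo{i}{r}\le 1$, the union bound $\sum_{j\ne i}(1-\behind{j}(t))\le\widetilde{S}t$ for places $r\ge 2$, and the exact value $\Q{i}{1}=\nstack{i}/\widetilde{S}$), and the equity bound rests on the same monotonicity of the conditional expected prize $H_i(t)=\sum_r \prize{r+1}\loo{i}{r}(t)$. Your Chebyshev association step is a restatement of the paper's comparison of the omitted part with $H_i(t_{\mathrm{R}})\,e^{-\nstack{i}t_{\mathrm{R}}}$ and of the retained part with $H_i(t_{\mathrm{R}})(1-e^{-\nstack{i}t_{\mathrm{R}}})$; you rightly note that it also gives the sharper bound $e^{-\nstack{i}t_{\mathrm{R}}}$ on the error relative to the exact equity.
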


The proof is in \cref{app:truncation}. On the left only the first place matters, since the
bound for the later places carries an extra factor $t_{\mathrm{L}}$. The first place has the
closed form $\Q{i}{1} = \nstack{i} / \widetilde{S}$ and could be left out of the quadrature,
which would save some $3\%$ of the nodes; for simplicity we integrate it like the other places.

\paragraph{Window.}
By \cref{eq:normalization} the left bound for the first place is at most $(n / c)\,
t_{\mathrm{L}}$ and the right bound at most $e^{-t_{\mathrm{R}} / (c\rho)}$. Requiring both to be
at most a target accuracy $\varepsilon$, with $t \approx \exp(-e^{-x})$ for $x \ll 0$ and $t \approx
e^{x}$ for $x \gg 0$, gives
\begin{equation}
  \xL \;\approx\; -\log \log \bigl( n / (c\, \varepsilon) \bigr)
  \qquad\text{and}\qquad
  \xR \;\approx\; \log \bigl( c \rho \log (1 / \varepsilon) \bigr) .
  \label{eq:window}
\end{equation}
For $\rho = 10^{9}$, $n \le 4{,}000$, $c = 20$ and $\varepsilon = 10^{-16}$, a target below the
rounding floor, this is $\xL \approx -3.7$ and $\xR \approx 27.3$; this paper uses $[\xL, \xR) =
[-3.75, 27.5)$. The window depends on $n$, $\rho$, $c$ and the target $\varepsilon$, but not
on the instance. With $\rho = 10^{4}$ instead, the right end would move to $\xR \approx 15.8$,
which shortens the window, and hence the number of nodes, by more than a third.

\paragraph{Width of the bumps.}
For an interior place $r + 1$, the Poisson-binomial coefficient forms a bump near the time when the
expected number ahead, $\mu_i(t) = \sum_{j \ne i} (1 - \behind{j}(t))$, reaches $r$.
A Gaussian approximation estimates its width in $\log t$ as the standard deviation of the number
ahead divided by the rate at which its mean changes:
\begin{equation}
  \sigma^{2} \;=\; \frac{v}{\mu'^{2}}, \qquad
  v = \sum_{j \ne i} \behind{j} (1 - \behind{j}), \qquad
  \mu' = \frac{\mathrm{d}\mu_i}{\mathrm{d} \log t} = \sum_{j \ne i} a_j \behind{j}, \qquad
  a_j = \nstack{j} t .
  \label{eq:bumpwidth}
\end{equation}

\begin{lemma}[Width estimate]
  \label[lemma]{lem:width}
  Let $g(a) = a^{2} / (2 (e^{a} - 1))$ and $\gmax = \max_{a > 0} g(a) \approx 0.324$, attained
  at $a = b^{*} \approx 1.594$. For every $t > 0$,
  \begin{equation}
    \sigma^{2} \;\ge\; \frac{1}{2 \sum_{j \ne i} g(a_j)} \;\ge\; \frac{1}{2 (n - 1)\, \gmax} ,
    \label{eq:width}
  \end{equation}
  with equality in the first inequality if and only if all $a_j$ are equal, and in the second if
  and only if moreover $a_j = b^{*}$.
\end{lemma}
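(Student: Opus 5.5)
We need to show $\sigma^2 = v/\mu'^2 \ge 1/(2\sum g(a_j))$. Write $q_j = e^{-a_j}$. Then $v=\sum q_j(1-q_j)$ and $\mu' = \sum a_j q_j$. The plan is to apply Cauchy--Schwarz to $\mu'$ after splitting each term so that one factor reproduces a term of $v$:
\[
\mu' = \sum_{j\ne i} \sqrt{q_j(1-q_j)}\cdot \frac{a_j q_j}{\sqrt{q_j(1-q_j)}} ,
\]
which gives
\[
\mu'^2 \le v \sum_{j\ne i} \frac{a_j^2 q_j}{1-q_j} = v\sum_{j\ne i}\frac{a_j^2}{e^{a_j}-1} = 2v\sum_{j\ne i} g(a_j).
\]
Dividing by $v\,\mu'^2$ yields the first inequality. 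Both $v$ and $\mu'$ are positive for $t>0$, since every $a_j>0$ and hence $0<q_j<1$.

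The second inequality is immediate from $g(a_j)\le\gmax$ for each of the $n-1$ terms. The value of $\gmax$ comes from a one-variable calculus check. The equation $g'(a)=0$ reduces to $2(e^a-1)=a e^a$, that is, $2(1-e^{-a})=a$, which has a unique positive root $b^*\approx1.594$. At that root $g(b^*)=b^*(2-b^*)/2\approx0.324$. To confirm this critical point is the global maximum, note that $g\to0$ at both $0^+$ and $\infty$, and that $2(1-e^{-a})-a$ is concave and has only one positive zero.

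For the equality cases, Cauchy--Schwarz is tight iff the two vectors are proportional. This means $a_j q_j/\sqrt{q_j(1-q_j)}$ is a constant multiple of $\sqrt{q_j(1-q_j)}$, i.e. the ratio $a_j q_j/(q_j(1-q_j)) = a_j/(1-e^{-a_j})$ is independent of $j$. The function $a\mapsto a/(1-e^{-a})$ is strictly increasing on $(0,\infty)$; showing this is the only mildly delicate point. Its derivative has the sign of $1-e^{-a}-ae^{-a}$, equivalently of $e^a-1-a>0$. Hence equality holds iff all $a_j$ are equal. Given that, the second inequality is tight iff $g(a_j)=\gmax$ for all $j$, which by uniqueness of the maximizer means $a_j=b^*$. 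Conversely, equal $a_j$ make both vectors proportional, so the characterization is an if-and-only-if. I expect no serious obstacle: the main step is choosing the right Cauchy--Schwarz weights, and the rest is monotonicity bookkeeping.
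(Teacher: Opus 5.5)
Your proof is correct and follows the paper's argument: the same Cauchy--Schwarz split of $\mu'$ into the factors $a_j\sqrt{q_j/(1-q_j)}$ and $\sqrt{q_j(1-q_j)}$, with the equality case reduced to strict monotonicity of $a/(1-e^{-a})$. Your calculus check that $\gmax = b^{*}(2-b^{*})/2$ at the unique critical point fills in a detail the paper leaves implicit.
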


The proof is in \cref{app:width}. The smallest estimated width is $(2 (n - 1) \gmax)^{-1/2} \approx
1.24\, (n - 1)^{-1/2}$, attained by equal opponent stacks when a fraction $1 - e^{-b^{*}} \approx
80\%$ of the opponents is expected to be ahead. Since $g(a) \le a/2$, we also have $\sigma^2 \ge
1/(\widetilde{S}t)$, consistent with the broad bumps of early places.

\paragraph{Predicted discretization error.}
For a Gaussian of mass $m_0$ and width $\sigma$ in the lattice variable, the trapezoidal
rule has a leading aliasing amplitude of $2 m_0\, e^{-2\pi^{2}
\sigma^{2} / \step^{2}}$ \citep{trefethen2014}. In an equal-stack field every placement
probability is $1 / n$, so the worst absolute error of a placement probability on a lattice of
spacing $\step$ in $\log t$ is predicted, with no fitted constant, as
\begin{equation}
  E_Q(n, \step) \;\approx\; \frac{2}{n} \exp\Bigl( -\frac{\kappa}{(n - 1)\, \step^{2}} \Bigr),
  \qquad \kappa = \frac{\pi^{2}}{\gmax} \approx 30.5 .
  \label{eq:kappa}
\end{equation}
On the exp-exp lattice the spacing at the narrowest bumps, at $t = b^{*} c \approx 32$, is
about $1.03\, \step$, which lowers $\kappa$ to $28.7$; we use $\kappa = 28$ below.
A rigorous bound for the lattice uniform in $\log t$ supports the same exponential rate
(\cref{thm:strip} in \cref{app:strip}). For the exp-exp rule, \cref{eq:kappa} is a prediction, and
the measured worst errors are within a factor \PredRatioMin{} to \PredRatioMax{} of it (\cref{app:error}).

\begin{table}[t]
  \centering
  \small
  \caption{Measured worst relative error of equities over the extremal set of \cref{sec:accuracy},
  by step and field size, on the window $[-3.75, 27.5)$. Shaded cells are at the rounding floor
  (below $5 \times 10^{-14}$). The corresponding table for placement probabilities is
  \cref{tab:sweeph} in \cref{app:error}.}
  \label{tab:step}
  \IfFileExists{tables/sweep_h_relV.tex}{\input{tables/sweep_h_relV}}{\todo{table pending}}
\end{table}

\paragraph{Choosing the step.}
Solving \cref{eq:kappa} for the step at which the predicted error meets a target $\varepsilon$
gives
\begin{equation}
  \step \;\approx\; \sqrt{\frac{\kappa}{(n - 1) \log \bigl( 2 / (n \varepsilon) \bigr)}},
  \label{eq:step}
\end{equation}
so the step shrinks as $n^{-1/2}$, up to the slowly varying logarithm, and the number of nodes
grows as $n^{1/2}$. With $\kappa = 28$ and $\varepsilon = 10^{-15}$, the level of the rounding
floor, \cref{eq:step} gives $\step \approx 1/10$, $1/32$ and $1/62$ for $n = 100$, $1{,}000$
and $4{,}000$. \Cref{tab:step} shows the measured worst relative error of an equity over the
extremal set of \cref{sec:accuracy} as the step is refined: the error falls as \cref{eq:kappa}
predicts until it meets the rounding floor. For each field size in \cref{tab:step}, the experiments
of \cref{sec:results} use the coarsest power-of-two step that reaches this floor, from $1/16$ at
$n=100$ to $1/64$ at $n=4{,}000$. For smaller fields, aliasing from the exponential tails of the
densities motivates an empirically chosen cap of $1/8$ (\cref{app:error}).
\Cref{app:constants} specifies the step for all field sizes up to $4{,}000$.

\begin{remark}[Rounding]
  \label[remark]{rem:rounding}
  The coefficients at each node are products and sums of non-negative numbers, and
  \cref{sec:allplayers} delivers them with an absolute error of $O(n \ueps)$; the quadrature
  sum over $\nodes$ positive weights adds $O(\nodes \ueps)$. The resulting floor is
  $O((n + \nodes)\ueps)$ in the worst case and $O(\sqrt{n + \nodes}\, \ueps)$ typically, about
  $10^{-14}$ for $n \approx \nodes \approx 10^{3}$. This accounting is heuristic, and we treat
  the floor as an estimate to be confirmed by measurement (\cref{sec:accuracy}).
\end{remark}

%% file: tables/sweep_h_relV.tex
\begin{tabular}{@{}rrcccccc@{}}
\toprule
$1/\step$ & $\nodes$ & $n=100$ & $n=200$ & $n=500$ & $n=1{,}000$ & $n=2{,}000$ & $n=4{,}000$ \\
\midrule
8 & 250 & $3.7 \times 10^{-10}$ & $4.2 \times 10^{-6}$ & $6.4 \times 10^{-4}$ & $6.0 \times 10^{-3}$ & -- & -- \\
12 & 375 & \cellcolor{floorshade}$1.4 \times 10^{-15}$ & $1.9 \times 10^{-11}$ & $4.9 \times 10^{-6}$ & $4.0 \times 10^{-4}$ & $2.8 \times 10^{-3}$ & -- \\
16 & 500 & \cellcolor{floorshade}$1.1 \times 10^{-15}$ & \cellcolor{floorshade}$2.2 \times 10^{-15}$ & $8.9 \times 10^{-9}$ & $7.1 \times 10^{-6}$ & $4.9 \times 10^{-4}$ & $2.2 \times 10^{-3}$ \\
24 & 750 & \cellcolor{floorshade}$1.3 \times 10^{-15}$ & \cellcolor{floorshade}$2.9 \times 10^{-15}$ & \cellcolor{floorshade}$3.7 \times 10^{-15}$ & $8.7 \times 10^{-10}$ & $3.0 \times 10^{-6}$ & $1.9 \times 10^{-4}$ \\
32 & 1{,}000 & \cellcolor{floorshade}$1.3 \times 10^{-15}$ & \cellcolor{floorshade}$2.8 \times 10^{-15}$ & \cellcolor{floorshade}$4.2 \times 10^{-15}$ & \cellcolor{floorshade}$1.1 \times 10^{-14}$ & $3.4 \times 10^{-9}$ & $6.5 \times 10^{-6}$ \\
48 & 1{,}500 & \cellcolor{floorshade}$1.2 \times 10^{-15}$ & \cellcolor{floorshade}$2.8 \times 10^{-15}$ & \cellcolor{floorshade}$4.5 \times 10^{-15}$ & \cellcolor{floorshade}$1.1 \times 10^{-14}$ & \cellcolor{floorshade}$2.4 \times 10^{-14}$ & $3.8 \times 10^{-10}$ \\
64 & 2{,}000 & \cellcolor{floorshade}$1.7 \times 10^{-15}$ & \cellcolor{floorshade}$2.3 \times 10^{-15}$ & \cellcolor{floorshade}$2.7 \times 10^{-15}$ & \cellcolor{floorshade}$1.5 \times 10^{-14}$ & \cellcolor{floorshade}$3.2 \times 10^{-14}$ & \cellcolor{floorshade}$4.0 \times 10^{-14}$ \\
\bottomrule
\end{tabular}

%% file: sections/04-allplayers.tex
\section{All players at once: stable leave-one-out deconvolution}
\label{sec:allplayers}

At a fixed node $t$, the integrals of \cref{eq:representation} need the leave-one-out coefficient
$\loo{i}{r}(t)$ of \cref{eq:loo-pgf} for every player $i$ and every place index $r < m$. This
section obtains all of them from one product in $O(n^{2})$ operations, with a stability
guarantee. Throughout, $t$ is fixed and suppressed, $\behind{j} = e^{-\nstack{j} t}$ is the
probability that player $j$ finishes behind the conditioned player, and the stacks are
normalized and sorted so that $\nstack{1} \le \dots \le \nstack{n}$, hence $\behind{1} \ge
\dots \ge \behind{n}$.

\subsection{The full product}
\label{sec:fullproduct}

The full generating function $\pgf(z) = \prod_{j=1}^{n} (\behind{j} + (1-\behind{j}) z) =
\sum_{r=0}^{n} \coef{r} z^{r}$ is computed by $n$ successive convolutions with the two-term
vectors $[\behind{j}, 1 - \behind{j}]$, in $\sum_{j} O(j) = O(n^{2})$ operations. The two entries
of each vector are evaluated separately, as $e^{-\nstack{j} t}$ and
$-\operatorname{expm1}(-\nstack{j} t)$, since the difference $1 - e^{-x}$ formed by subtraction
loses all relative accuracy for small $x$. Each convolution step acts on non-negative terms and
does not amplify the error already present, so the computed coefficients satisfy
\begin{equation}
  \sum_{r} \bigl| \widehat{\coef{r}} - \coef{r} \bigr| \;\le\; \theta, \qquad
  \theta \le (4 + e^{-2})\, n \ueps + O(\ueps^{2})
  \label{eq:product-error}
\end{equation}
(\cref{app:product}). Factors and coefficients that underflow are replaced by zero, an absolute
error below $2^{-1022}$ each, which we ignore. The coefficients are probabilities and sum to
one, a fact the error analysis below uses repeatedly. Pairing the factors in a tree and
multiplying the pairs with the fast Fourier transform would form the product in $O(n \log^{2} n)$
operations; however, the rest of the work at a node is $O(n^{2})$ in any case, and direct
convolution is the most accurate of the standard methods for this distribution
\citep{hong2013,biscarri2018}.

\subsection{Stable division by a linear factor}
\label{sec:deconvolution}
\label{sec:clamp}

Since $\loopgf{i}(z) = \pgf(z) / (\behind{i} + (1-\behind{i}) z)$, the leave-one-out
coefficients satisfy $\coef{r} = \behind{i} \loo{i}{r} + (1 - \behind{i}) \loo{i}{r-1}$ for all
$r$, with $\loo{i}{-1} = \loo{i}{n} = 0$. We work with $\scaledloo{i}{r} = \behind{i}
\loo{i}{r}$, which is exactly what \cref{eq:representation} needs once $\nstack{i}
e^{-\nstack{i} t}$ is written as $\nstack{i} \behind{i}$. With the multiplier $\mult{i} = (1 -
\behind{i}) / \behind{i}$ the relation reads $\coef{r} = \scaledloo{i}{r} + \mult{i}
\scaledloo{i}{r-1}$, and it can be solved in two directions:
\begin{align}
  \text{forward:}\quad & \scaledloo{i}{r} = \coef{r} - \mult{i}\, \scaledloo{i}{r-1},
    \qquad r = 0, 1, \dots, \quad \scaledloo{i}{-1} = 0,
    \label{eq:forward} \\
  \text{backward:}\quad & \scaledloo{i}{r-1} = \bigl( \coef{r} - \scaledloo{i}{r} \bigr) / \mult{i},
    \qquad r = n, n-1, \dots, \quad \scaledloo{i}{n} = 0 .
    \label{eq:backward}
\end{align}
Either recurrence costs $O(1)$ per coefficient. Both are synthetic division of $\pgf$ by a
linear factor, that is, polynomial deflation, which is unstable in general;
\citet{knyazev2026} mention \cref{eq:forward} but set it aside for that reason.
The classical analysis of \citet{peters1971} shows that deflation is stable if it starts from the
constant term when the removed root is large in modulus, and from the leading coefficient when it is
small. The roots of $\pgf$ are the negative reals $-1/\mult{j}$, so here the two regimes are
separated by a single threshold, $|{-1/\mult{i}}| \ge 1$ if and only if $\behind{i} \ge 1/2$, and
the direction alone controls the error.

\begin{theorem}[Stability of direction-switched deconvolution]
  \label{thm:stability}
  Let the computed full-product coefficients satisfy $\sum_r |\widehat{\coef{r}} - \coef{r}|
  \le \theta$, let $\behind{i}$ and $1 - \behind{i}$ be evaluated as in
  \cref{sec:fullproduct}, and let \cref{eq:forward} be used when $\behind{i} \ge 1/2$ and
  \cref{eq:backward} otherwise, in floating-point arithmetic with unit roundoff $\ueps$. Then
  for every $r$,
  \begin{equation}
    \bigl| \widehat{\scaledloo{i}{r}} - \scaledloo{i}{r} \bigr|
    \;\le\; \theta + 7\ueps \sum_{s} \coef{s} + O(\ueps^{2})
    \;=\; \theta + 7\ueps + O(\ueps^{2}) .
    \label{eq:stability-bound}
  \end{equation}
  With $\theta$ from \cref{eq:product-error}, every scaled leave-one-out coefficient has absolute
  error at most $(4 + e^{-2})\, n \ueps + 7 \ueps + O(\ueps^{2}) < (4.2\, n + 7)\ueps$,
  uniformly in $r$. If the direction is
  chosen against the rule, the error committed at step $s$ is amplified by
  $\max(\mult{i}, 1/\mult{i})^{\,|r - s|}$ at step $r$.
\end{theorem}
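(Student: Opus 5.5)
The plan is to treat the two recurrences as linear first-order error recursions and to show that, in the chosen direction, the error propagation factor is at most one. Write $\delta_r = \widehat{\scaledloo{i}{r}} - \scaledloo{i}{r}$.

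\emph{Forward case.} Here $\behind{i} \ge 1/2$, so $\mult{i} \le 1$. The computed recurrence is
\[
\widehat{\scaledloo{i}{r}} = \mathrm{fl}\bigl(\widehat{\coef{r}} - \widehat{\mult{i}}\, \widehat{\scaledloo{i}{r-1}}\bigr).
\]
First, the multiplier $\widehat{\mult{i}}$, a quotient of two separately evaluated quantities (\texttt{expm1} and \texttt{exp}), carries a relative error of at most about $3\ueps$. Modelling each operation by $(1+\epsilon)$ with $|\epsilon| \le \ueps$ then gives
\[
\delta_r = (\widehat{\coef{r}} - \coef{r}) - \mult{i}\,\delta_{r-1} + \eta_r ,
\]
where the local error satisfies $|\eta_r| \le c\,\ueps\,(\coef{r} + \mult{i}\scaledloo{i}{r-1}) + O(\ueps^{2})$ for a small constant $c$ (roughly $1 + 1 + 3 + 1$ for the subtraction, the product, $\widehat{\mult{i}}$, and one spare).

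The key observation is that $\coef{r} = \scaledloo{i}{r} + \mult{i}\scaledloo{i}{r-1}$ with both terms non-negative. Hence the exact value $\scaledloo{i}{r}$ lies in $[0, \coef{r}]$ and $\mult{i}\scaledloo{i}{r-1} \le \coef{r}$, so every local error is $O(\ueps\,\coef{r})$. Unrolling the recursion gives
\[
|\delta_r| \le \sum_{s \le r} \mult{i}^{\,r-s}\bigl(|\widehat{\coef{s}} - \coef{s}| + |\eta_s|\bigr).
\]
Since $\mult{i}^{\,r-s} \le 1$, this is at most $\theta + 7\ueps\sum_s \coef{s}$ after the constants are tallied.

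\emph{Backward case.} Here $\behind{i} < 1/2$, so $1/\mult{i} < 1$. The recurrence is
\[
\widehat{\scaledloo{i}{r-1}} = \mathrm{fl}\bigl((\widehat{\coef{r}} - \widehat{\scaledloo{i}{r}})/\widehat{\mult{i}}\bigr),
\]
and the error obeys
\[
\delta_{r-1} = \bigl((\widehat{\coef{r}} - \coef{r}) - \delta_r\bigr)/\mult{i} + \eta'_r .
\]
The exact quotient equals $\scaledloo{i}{r-1} \le \coef{r}/\mult{i} \le \coef{r}$, so again $|\eta'_r| = O(\ueps\,\coef{r})$, and the propagation factor $1/\mult{i}$ is at most one. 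Summing over $s$ gives the same bound. Finally, $\sum_s \coef{s} = 1$ and $\theta$ from \cref{eq:product-error} give $(4+e^{-2})n\ueps + 7\ueps < (4.2n+7)\ueps$ for $n \ge 1$.

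\emph{Amplification claim.} Run the same unrolled recursion with the wrong direction. The homogeneous part multiplies an error injected at step $s$ by $(-\mult{i})^{r-s}$ in the forward direction, or by $(-1/\mult{i})^{s-r}$ in the backward direction, whose modulus is then $\max(\mult{i},1/\mult{i})^{|r-s|}$.

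\emph{Main obstacle.} The delicate step is pinning the constant $7$. The local errors must be bounded by $\ueps\,\coef{s}$ rather than by $\ueps\,\widehat{\scaledloo{i}{s}}$, because the computed values might exceed their exact counterparts. I would handle this by keeping computed quantities only in the $O(\ueps^{2})$ terms, using the exact identity $\coef{r} = \scaledloo{i}{r} + \mult{i}\scaledloo{i}{r-1}$. I would also account carefully for the error of $\widehat{\mult{i}}$. That error is systematic: it appears at every step with weight $\mult{i}\scaledloo{i}{s-1}$ or $\scaledloo{i}{s-1}$, and summed over $s$ it contributes at most $3\ueps\sum_s\coef{s}$.
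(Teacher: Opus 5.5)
Your structure matches the paper's proof. You use a first-order error recursion whose propagation factor ($\mult{i}$ forward, $1/\mult{i}$ backward) has modulus at most one under the rule. You turn local errors into multiples of $\ueps\,\coef{r}$ through \cref{lem:invariant}, and close the sum with $\sum_r \coef{r} = 1$ and $\sum_r |\widehat{\coef{r}} - \coef{r}| \le \theta$. Your wrong-direction argument is also the paper's.

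\textbf{The gap is your error model for the multiplier.} The argument $x = \nstack{i} t$ is itself a rounded product. The evaluation of \cref{sec:fullproduct}, which yields the $\theta$ of \cref{eq:product-error}, includes this rounding; the constant $4+e^{-2}$ there reflects it. So $\exp(-\mathrm{fl}(x))$ has relative error up to $(x+1)\ueps$, and the computed multiplier (or its reciprocal) carries $(x+4)\ueps$, not $3\ueps$.
\begin{itemize}
\item In the forward group the rule gives $x \le \log 2$. The tally $(\log 2 + 4) + 1 + 1 < 7$ is exactly where the constant $7$ comes from, so nothing breaks there.
\item In the backward group, however, $x$ is unbounded; it reaches several hundred before $\behind{i}$ underflows. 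Your estimate $\scaledloo{i}{r-1} \le \coef{r}/\mult{i} \le \coef{r}$ then gives only a local error $(x+6)\ueps\,\coef{r}$. This is not uniform in $x$, so the bound $\theta + 7\ueps$ does not follow from your argument.
\end{itemize}

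\textbf{The fix is to keep the factor you discarded.} Since $1 - \behind{i} > 1/2$, we have $1/\mult{i} = \behind{i}/(1-\behind{i}) < 2e^{-x}$. Hence $(x+6)\ueps\,\scaledloo{i}{r-1} \le 2e^{-x}(x+6)\ueps\,\coef{r} \le 7\ueps\,\coef{r}$ for $x \ge \log 2$. This holds because $2e^{-x}(x+6)$ is decreasing and equals $\log 2 + 6$ at $x = \log 2$. The smallness of the backward multiplier is what absorbs the large relative error of $\widehat{\behind{i}}$.

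A smaller point in the forward case: one constant $c \approx 6$ applied to the weight $\coef{r} + \mult{i}\scaledloo{i}{r-1} \le 2\coef{r}$ would overshoot $7$. Instead, charge the subtraction error to $\scaledloo{i}{r}$ and the product and multiplier errors to $\mult{i}\scaledloo{i}{r-1}$, each at most $\coef{r}$. This gives $(\log 2 + 6)\ueps\,\coef{r}$.
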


\begin{proof}[Proof idea]
  A perturbation $\delta$ of the carried value changes the next value by $-\mult{i} \delta$ in
  \cref{eq:forward} and by $-\delta / \mult{i}$ in \cref{eq:backward}, a factor of modulus at
  most one under the rule, so the error at any step is at most the sum of the local errors of
  the steps traversed. Each local error is a small multiple of $\ueps \coef{r}$ plus the error
  of the input coefficient, and the sum over $r$ is bounded because $\sum_r \coef{r} = 1$ and
  the input errors sum to $\theta$. \Cref{app:stability} gives the accounting.
\end{proof}

The bound is absolute, in units of total probability, so small coefficients are accurate in
absolute but not in relative terms. The prefix--suffix scheme of \citet{knyazev2026}, which
forms every leave-one-out coefficient as a sum of non-negative products, keeps relative accuracy
at the price of $O(nm^{2})$ per node; \cref{sec:loo} measures both.

\paragraph{Clamping.}
The recurrences preserve a simple invariant, which we enforce in floating point.

\begin{lemma}[Invariant]
  \label[lemma]{lem:invariant}
  For all $i$ and $r$, $0 \le \scaledloo{i}{r} \le \coef{r}$.
\end{lemma}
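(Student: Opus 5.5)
The plan is to read the invariant directly off the exact relation $\coef{r} = \scaledloo{i}{r} + \mult{i}\,\scaledloo{i}{r-1}$, together with the probabilistic meaning of $\scaledloo{i}{r}$. The statement concerns exact quantities: floating point enters only afterwards, when the paper clamps computed values to this interval.

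The first step is non-negativity. By \cref{eq:loo-pgf}, $\loo{i}{r}$ is the probability that exactly $r$ of the players $j \ne i$ finish ahead of $i$, conditional on the clock value $t$. Equivalently, it is a coefficient of a product of linear factors with non-negative coefficients $\behind{j}$ and $1-\behind{j}$, so its expansion involves no subtraction. Hence $\loo{i}{r} \ge 0$. Multiplying by $\behind{i} = e^{-\nstack{i} t} > 0$ gives $\scaledloo{i}{r} \ge 0$.

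The second step is the upper bound. Rearranging the exact relation gives $\coef{r} - \scaledloo{i}{r} = \mult{i}\,\scaledloo{i}{r-1} = (1-\behind{i})\,\loo{i}{r-1}$. Since $0 \le \behind{i} \le 1$, the factor $1-\behind{i}$ is non-negative, and the first step gives $\loo{i}{r-1} \ge 0$, so the right-hand side is non-negative. The boundary case $r = 0$ is covered by the convention $\loo{i}{-1} = 0$, so there $\coef{0} = \scaledloo{i}{0}$ and the bound holds with equality. For $r = n$ we have $\scaledloo{i}{n} = 0$, which satisfies both bounds trivially.

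One can also say this probabilistically. $\coef{r}$ is the probability that exactly $r$ of all $n$ players have $\clock{j} < t$. $\scaledloo{i}{r}$ is the probability of the sub-event that, in addition, player $i$ is behind, that is $\clock{i} > t$, and exactly $r$ of the others are ahead. Independence makes the probability of this sub-event the product $\behind{i}\loo{i}{r}$, so the inequality is monotonicity of probability.

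There is no real obstacle here. The only care needed is to apply the argument to the exact coefficients, not to the computed ones, and to handle the boundary indices $r = 0$ and $r = n$ with the stated conventions. I would also note, for later use, that the difference $\coef{r} - \scaledloo{i}{r}$ equals $(1-\behind{i})\loo{i}{r-1}$. This identity justifies clamping each computed $\widehat{\scaledloo{i}{r}}$ to the interval $[0, \widehat{\coef{r}}]$: the exact value lies in $[0, \coef{r}]$, and $\widehat{\coef{r}}$ is within the error $\theta$ of $\coef{r}$, so projecting onto this interval does not increase the error bound of \cref{thm:stability} beyond $\theta$.
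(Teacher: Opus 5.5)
Your proof is correct and is essentially the paper's own argument: the paper also reads the invariant directly off $\coef{r} = \scaledloo{i}{r} + (1-\behind{i})\loo{i}{r-1}$, with both terms non-negative. Your probabilistic reading as monotonicity over a sub-event, and your explicit treatment of the boundary indices $r=0$ and $r=n$, are sound additions that leave the argument unchanged.
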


\begin{proof}
  $\coef{r} = \scaledloo{i}{r} + (1 - \behind{i}) \loo{i}{r-1}$ with both terms non-negative.
\end{proof}

After each step of \cref{eq:forward,eq:backward} we clamp the carried value to $[0, \widehat{\coef{r}}]$,
with the index shifted by one in the backward direction. In exact arithmetic this is the identity.
In floating point it retains an $O(n\ueps)$ absolute error bound (\cref{app:stability}) while
removing a residual that this bound alone does not exclude. At a node where the smallest stack's
$\behind{i}$ has just dropped below $1/2$, a residue of order $\ueps$ can arise where the true value
is zero. Amplified by the quadrature weight, it produces a relative error of up to $n \rho\, \ueps
\step$ in the first-place probability, which for $\rho = 10^{9}$ exceeds every other term of the
budget (\cref{app:fartail}). The clamp removes the residue at its source, because the bound
$\widehat{\coef{n-2}}$ is itself zero where the true value vanishes.

\subsection{Algorithm and cost}
\label{sec:algorithm}
\label{sec:implementation}

\begin{algorithm}[b]
  \caption{\DEICM{}: probabilities of the first $m$ places and equities for all players.}
  \label{alg:deicm}
  \begin{algorithmic}[1]
    \Require stacks $\stack{1..n}$, prizes $\prize{1..m}$, grid $(\step, \xL, \xR)$
    \State normalize by $c\,s_{\max}$ ($c = 20$) and sort: $\nstack{1} \le \dots \le \nstack{n}$
    \State $\widehat{Q} \gets 0_{n \times m}$
    \For{each node $x = \xL, \xL + \step, \dots < \xR$}
      \State $t \gets \dexp(x)$;\quad $w \gets \step\, \dexp'(x)$
      \State $\behind{j} \gets \exp(-\nstack{j} t)$;\quad $p_j \gets -\operatorname{expm1}(-\nstack{j} t)$ for all $j$ \Comment{$\behind{j}$ non-increasing in $j$}
      \State compute $\coef{0..n}$ by $n$ convolutions with $[\behind{j}, p_j]$ \Comment{$O(n^{2})$}
      \State $b \gets |\{ j : \behind{j} \ge 1/2 \}|$ \Comment{players $1..b$ go forward}
      \State $\scaledloo{i}{-1} \gets 0$ for $i \le b$;\quad $\scaledloo{i}{n} \gets 0$ for $i > b$
      \For{$r = 0, \dots, m-1$} \Comment{forward group, vectorized over $i \le b$}
        \State $\scaledloo{i}{r} \gets \operatorname{clamp}\bigl(\coef{r} - \tfrac{p_i}{\behind{i}}\, \scaledloo{i}{r-1},\, 0,\, \coef{r}\bigr)$;\quad $\widehat{Q}_{i, r+1} \mathrel{+}= w\, \scaledloo{i}{r}$
      \EndFor
      \For{$r = n, \dots, 1$} \Comment{backward group, vectorized over $i > b$}
        \State $\scaledloo{i}{r-1} \gets \operatorname{clamp}\bigl(\tfrac{\behind{i}}{p_i} (\coef{r} - \scaledloo{i}{r}),\, 0,\, \coef{r-1}\bigr)$;\quad \textbf{if} $r \le m$: $\widehat{Q}_{i, r} \mathrel{+}= w\, \scaledloo{i}{r-1}$
      \EndFor
    \EndFor
    \State $\widehat{Q}_{i, r} \gets \nstack{i}\, \widehat{Q}_{i, r}$ for all $i, r$; undo the sort
    \State \Return $\widehat{Q}$ and $\widehat{V}_i = \sum_r \prize{r} \widehat{Q}_{i,r}$
  \end{algorithmic}
\end{algorithm}

\Cref{alg:deicm} puts these steps together; the probability of finishing out of the money, when $m <
n$, is $1 - \sum_{r \le m} \widehat{Q}_{i,r}$. It computes the $n \times m$ matrix in
$\Theta\bigl(\nodes\,(n^{2} + b m + (n - b) n)\bigr) = O(\nodes n^{2})$ time and $O(nm)$ memory,
where $b$ is the size of the forward group at a node. The forward recurrence can stop at place $m$,
whereas the backward one must start from place $n$. The full product alone costs $O(n^{2})$, so the
leave-one-out step adds no asymptotic cost.
Cached prefix and suffix products \citep{knyazev2026} give the matrix in $O(nm^{2})$ per node, which
has a lower asymptotic cost when $m \lesssim \sqrt{n}$, and the adjoint pass of \citet{wolters2026}
gives all equities, but not the individual places, in $O(nm)$.
Both methods avoid subtractions; the deconvolution accepts them, and this is what lowers its cost
for the placement matrix with $m \gtrsim \sqrt{n}$ (\cref{tab:comparison} in \cref{sec:related-pl}).

\paragraph{Implementation.}
The inner loops of \cref{alg:deicm} run over players for a fixed place, so the placement matrix
is stored place-major, and the players of a direction group, which are contiguous after sorting,
are processed in blocks. The recurrences of different players are independent, so the loops
vectorize. The $2n$ evaluations of $\exp$ and $\operatorname{expm1}$ are the only transcendental
operations at a node. Each node adds its contribution to all $nm$ entries of the matrix, so the
nodes are processed in blocks of $32$, which streams the matrix through memory once per block
instead of once per node. The nodes are independent, so the outer loop parallelizes across
threads; a fixed assignment of nodes to threads and a fixed reduction order keep the result
bit-for-bit reproducible.

%% file: sections/05-results.tex
\section{Numerical results}
\label{sec:results}

All results come from a stand-alone implementation of \cref{alg:deicm} written for this paper,
run on the grid of \cref{sec:errorbudget} unless a sweep varies it, on an AMD Ryzen~9 9950X3D
(16 cores) under WSL2 with GCC~14.2 at \texttt{-O3 -march=native}.

\subsection{Validation suite and accuracy}
\label{sec:references}
\label{sec:accuracy}

\begin{table}[b]
  \centering
  \caption{Worst errors over the validation suite by field size $n$, at the step and node count of
  \cref{sec:errorbudget} (first two rows); the third row counts the instance--reference comparisons.
  Cumulative probabilities sum places $1$ through $r$. Row sums (the places of one player) and
  column sums (the players at one place) show deviations from one.
  Probability errors use the available exact references; row sums are checked for $m=n$ and column
  sums for every computed place.}
  \label{tab:accuracy}
  \IfFileExists{tables/accuracy_by_n.tex}{\input{tables/accuracy_by_n}}{\todo{table pending}}
\end{table}

Claims at the level of $10^{-14}$ cannot be checked by Monte Carlo, whose standard error decreases
as the inverse square root of the number of samples, nor by general-purpose exact evaluation, which
is infeasible beyond about $30$ players. The suite therefore consists of instances whose placement
probabilities or equities have closed forms or low-dimensional exact dynamic programs that scale to
$n = 4{,}000$ (\cref{app:references}): winner-take-all, two-prize, flat and linear prize structures
on arbitrary stacks, and arbitrary prize structures on equal stacks, on two stack sizes, on equal
stacks with up to ten exceptions, and on small fields with $n \le 20$. The stacks range from
equal to geometric progressions, Zipf and log-normal draws and single outliers, with ratios up
to $\rho = 10^{9}$; the prizes range from winner-take-all to flat, linear, top-heavy and realistic
payouts and to satellites paying the top half, the top $80\%$ or all but the last player
(\cref{app:suite}). The references are evaluated in double-double arithmetic, so their own
rounding error is far below the errors measured. The suite has \NumInstances{} instances with
$n$ from $3$ to \MaxN{}, formed from \NumStacks{} stack vectors and \NumPrizeStructures{} prize
structures; \NumCrossChecked{} of them have multiple references and are checked against
each, \NumReferencePairs{} comparisons in all.

\Cref{tab:accuracy} gives the worst errors by field size. Every field size is at the rounding
floor of \cref{rem:rounding}: over all instances the worst relative error of an equity is
\WorstRelV{} and the worst absolute error of a placement probability \WorstAbsQ{}, with $250$
nodes up to $30$ players and $2{,}000$ up to $4{,}000$. The row and column sums of the matrix,
accumulated in double-double arithmetic, deviate from one by at most \WorstSumDev{} and
\WorstColDev{}; the column sums grow faster with $n$ because in fields with large groups of equal
stacks thousands of identical rounding errors add coherently. For the families with exact full matrices we also
compare the cumulative probabilities $F_{i,r} = \sum_{r' \le r} \Q{i}{r'}$, whose worst relative
error is \WorstRelF{}. Every equity under a non-increasing prize structure is a non-negative
combination of them, $\V{i} = \sum_{r} (\prize{r} - \prize{r+1}) F_{i,r}$, so this figure bounds
the relative error of the equities implied by the computed matrix for every prize structure at
once. Probabilities that are themselves small are accurate in absolute but not in relative
terms, since the bound of the deconvolution is absolute (\cref{sec:deconvolution}): the worst
relative error of a probability above $10^{-10}$ is \WorstRelQten{}.

The sweeps and comparisons below run on a smaller \emph{extremal set} of $25$ instances per field
size, chosen from the error budget: groups of equal stacks, which minimize the estimated width
(\cref{lem:width}), with the prize structures that expose the deep places, and fields spanning the
full ratio $\rho$, for which the truncation bounds are tight.
To check that it captures the largest errors of the validation suite, we run the whole suite at
twice the step of \cref{sec:errorbudget}, where the worst errors are above the floor: they agree
with those of the extremal set exactly for probabilities and to within $0.2\%$ for equities (\cref{app:suite}).

\subsection{Quadrature rules}
\label{sec:ablations}

\begin{figure}[b]
  \centering
  \IfFileExists{figures/sweep_rule.pdf}{\includegraphics[width=0.92\linewidth]{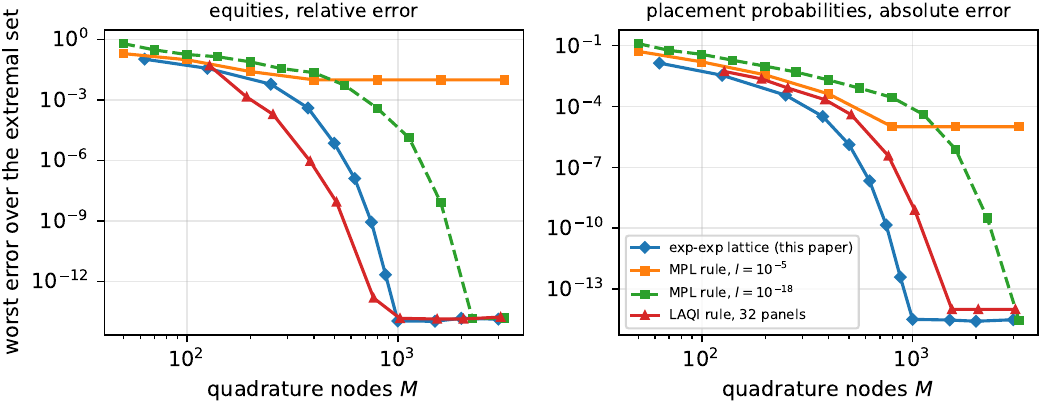}}{\todo{figure pending}}
  \caption{Worst relative error of equities (left) and worst absolute error of placement
  probabilities (right) over the extremal set with $n \le 1{,}000$, against the number of
  nodes, for the exp-exp lattice and for the quadrature rules of \MPL{} and \LAQI{} in the same
  engine. \MPL{} rule: published margin $l = 10^{-5}$ (solid) and $l = 10^{-18}$ (dashed).
  Only the quadrature rule is exchanged: the curves compare the rules as node layouts, not the
  \MPL{} and \LAQI{} methods as published.}
  \label{fig:rule}
\end{figure}

\Cref{fig:rule} compares the exp-exp lattice with the quadrature rules of \MPL{} and \LAQI{}, in the
same leave-one-out engine and on the extremal set with $n \le 1{,}000$. The \MPL{} rule is
Gauss--Legendre in the Gumbel variable on a window set from the range of the scores with a quantile
margin $l$; it is shown with its published margin $l = 10^{-5}$ and with $l = 10^{-18}$, which makes
the omitted tails negligible at the target accuracy. The \LAQI{} rule is composite Gauss--Legendre
in $\log(1 + t)$ on $32$ panels with the breaks set from the instance. The engine integrates every
place, including the first, and applies no output rescaling; in the published implementations,
\MPL{} evaluates the first place analytically and \LAQI{} rescales its equities to the prize pool.

At its published margin the \MPL{} rule levels off at an absolute error of $10^{-5}$ in the probabilities
and a relative error of order $n\, l$ in the equities. Once truncation is negligible, the rules differ
only in how their nodes are spread over $\log t$. The exp-exp lattice reaches the floor at $1{,}000$
nodes for both quantities. Before the floor, the \LAQI{} rule is more accurate than the lattice for
equities and less accurate for probabilities; it reaches the floor at $1{,}024$ nodes for equities
and at $1{,}536$ for probabilities. With its breaks fixed for the stack-ratio bound instead of set
from the instance, it needs $1{,}536$ nodes for both quantities in our implementation (not shown).
The \MPL{} rule with the wide margin needs $2{,}260$ and $3{,}200$.
The per-node cost of \cref{alg:deicm} does not depend on the rule, so at a given accuracy the order
of the curves is also the order of the running times.

\subsection{Leave-one-out schemes}
\label{sec:loo}

\begin{table}[t]
  \centering
  \small
  \caption{Worst errors over the extremal set with $n \le 1{,}000$ when the leave-one-out step of
  \cref{alg:deicm} is replaced, on the grid of \cref{sec:errorbudget}.
  The prefix--suffix row covers only instances with $n m^{2} \le 2 \times 10^{8}$.
  ``Overflow'' indicates a non-finite result on at least one instance.}
  \label{tab:loo}
  \IfFileExists{tables/ablation_loo.tex}{\input{tables/ablation_loo}}{\todo{table pending}}
\end{table}

\Cref{tab:loo} runs \cref{alg:deicm} with its leave-one-out step replaced: each single-direction
recurrence alone, the direction rule of \cref{sec:deconvolution} with and without the clamp, and
the prefix--suffix products of \citet{knyazev2026}. Neither single direction is usable: each
overflows on some instances, and even with the clamp it loses every digit on others. The
direction rule alone brings the equity error down to the level of the residue described in
\cref{sec:clamp}, and the clamp removes it. The prefix--suffix products agree with the clamped
deconvolution to the rounding floor on equities and on absolute probability errors, and they keep
relative accuracy on small probabilities; on the instances their row covers, they take about $27$
times the total running time.

\subsection{Runtime}
\label{sec:runtime}

\begin{table}[t]
  \centering
  \small
  \caption{Median wall-clock time over three runs of \cref{alg:deicm} on the grid of \cref{sec:errorbudget},
  with the step and node count of each field size, on one and eight threads, for the full matrix and
  for $m = 0.15n$ paid places. Stacks are log-normal draws.}
  \label{tab:runtime}
  \IfFileExists{tables/runtime.tex}{\input{tables/runtime}}{\todo{table pending}}
\end{table}

\Cref{tab:runtime} reports the wall-clock time of \cref{alg:deicm} on one thread and on eight.
The single-thread time grows as $n^{2}$ per node, as the cost analysis of \cref{sec:algorithm}
predicts, and the number of nodes grows as $n^{1/2}$, in powers of two: the step halves between
$n = 1{,}000$ and $n = 2{,}000$, which adds a factor of two, while $n = 2{,}000$ and $n = 4{,}000$
share a step. Eight threads give a speed-up of \SpeedupEightMin{} to \SpeedupEightMax{} for
$n \ge 1{,}000$. For comparison, our Monte Carlo implementation at $n = 4{,}000$ takes
\McPerSampleFourK{} microseconds per sampled finishing order. Estimating a probability of $10^{-3}$
with a relative standard error of $10^{-3}$ would require about $10^9$ samples, resulting in about
two CPU-days. \MPL{} reports about $2$\,ms per query for $K = 10$ places with $200$ nodes on a GPU
in single precision \citep{knyazev2026}, and \LAQI{} $2.47$\,s for the equities of $4{,}000$ players
with $192$ nodes in Node.js \citep{wolters2026}; implementations, outputs and accuracies differ, so
these figures are not directly comparable with ours.

%% file: tables/accuracy_by_n.tex
{\footnotesize\setlength{\tabcolsep}{2pt}
\begin{tabular}{@{}lccccccccc@{}}
\toprule
$n$ & $3$ & $10$ & $30$ & $100$ & $200$ & $500$ & $1{,}000$ & $2{,}000$ & $4{,}000$ \\
\midrule
step $\step$ & $1/8$ & $1/8$ & $1/8$ & $1/16$ & $1/16$ & $1/32$ & $1/32$ & $1/64$ & $1/64$ \\
nodes $\nodes$ & $250$ & $250$ & $250$ & $500$ & $500$ & $1{,}000$ & $1{,}000$ & $2{,}000$ & $2{,}000$ \\
comparisons & $152$ & $250$ & $180$ & $180$ & $180$ & $180$ & $180$ & $180$ & $180$ \\
\midrule
equities (rel.) & $4.3\mathrm{e}{-16}$ & $5.8\mathrm{e}{-16}$ & $8.3\mathrm{e}{-16}$ & $1.8\mathrm{e}{-15}$ & $3.1\mathrm{e}{-15}$ & $4.8\mathrm{e}{-15}$ & $1.1\mathrm{e}{-14}$ & $3.2\mathrm{e}{-14}$ & $4.0\mathrm{e}{-14}$ \\
probabilities (abs.) & $1.7\mathrm{e}{-16}$ & $3.3\mathrm{e}{-16}$ & $4.4\mathrm{e}{-16}$ & $4.4\mathrm{e}{-16}$ & $7.8\mathrm{e}{-16}$ & $1.9\mathrm{e}{-15}$ & $3.2\mathrm{e}{-15}$ & $5.9\mathrm{e}{-15}$ & $1.1\mathrm{e}{-14}$ \\
cumulative probabilities (rel.) & $3.1\mathrm{e}{-16}$ & $5.8\mathrm{e}{-16}$ & $8.3\mathrm{e}{-16}$ & $1.6\mathrm{e}{-15}$ & $3.2\mathrm{e}{-15}$ & $1.2\mathrm{e}{-14}$ & $2.3\mathrm{e}{-14}$ & $3.0\mathrm{e}{-14}$ & $6.0\mathrm{e}{-14}$ \\
row sums & $2.5\mathrm{e}{-16}$ & $3.7\mathrm{e}{-16}$ & $4.2\mathrm{e}{-16}$ & $1.0\mathrm{e}{-15}$ & $2.1\mathrm{e}{-15}$ & $3.0\mathrm{e}{-15}$ & $5.9\mathrm{e}{-15}$ & $1.1\mathrm{e}{-14}$ & $2.2\mathrm{e}{-14}$ \\
column sums & $3.3\mathrm{e}{-16}$ & $1.2\mathrm{e}{-15}$ & $2.6\mathrm{e}{-15}$ & $7.0\mathrm{e}{-15}$ & $1.1\mathrm{e}{-14}$ & $1.8\mathrm{e}{-14}$ & $4.4\mathrm{e}{-13}$ & $9.0\mathrm{e}{-14}$ & $4.7\mathrm{e}{-13}$ \\
\bottomrule
\end{tabular}}

%% file: tables/ablation_loo.tex
{\footnotesize
\begin{tabular}{@{}lccc@{}}
\toprule
leave-one-out & equities (rel.) & probabilities (abs.) & probabilities $\ge 10^{-10}$ (rel.) \\
\midrule
forward only & overflow & overflow & overflow \\
forward only, clamped & $3.6 \times 10^{2}$ & $2.4 \times 10^{2}$ & $2.7 \times 10^{11}$ \\
backward only & overflow & overflow & overflow \\
backward only, clamped & $6.3 \times 10^{-1}$ & $9.5 \times 10^{-3}$ & $1.0 \times 10^{0}$ \\
direction rule & $9.6 \times 10^{-8}$ & $3.2 \times 10^{-15}$ & $4.6 \times 10^{-8}$ \\
direction rule, clamped & $1.1 \times 10^{-14}$ & $3.2 \times 10^{-15}$ & $1.9 \times 10^{-8}$ \\
prefix--suffix products & $7.3 \times 10^{-15}$ & $2.6 \times 10^{-15}$ & $3.6 \times 10^{-14}$ \\
\bottomrule
\end{tabular}}

%% file: tables/runtime.tex
{\setlength{\tabcolsep}{3.5pt}\begin{tabular}{@{}rrrrrrr@{}}
\toprule
 & & & \multicolumn{2}{c}{full matrix} & \multicolumn{2}{c@{}}{$m = 0.15n$} \\
\cmidrule(lr){4-5}\cmidrule(l){6-7}
$n$ & $1/\step$ & $\nodes$ & 1 thread & 8 threads & 1 thread & 8 threads \\
\midrule
100 & 16 & 500 & 1.8\,ms & 0.8\,ms & 1.2\,ms & 0.8\,ms \\
1{,}000 & 32 & 1{,}000 & 200.0\,ms & 36.8\,ms & 104.7\,ms & 17.3\,ms \\
2{,}000 & 64 & 2{,}000 & 1.83\,s & 316.4\,ms & 792.7\,ms & 132.7\,ms \\
4{,}000 & 64 & 2{,}000 & 7.58\,s & 1.58\,s & 3.05\,s & 505.8\,ms \\
\bottomrule
\end{tabular}}

%% file: sections/06-related.tex
\section{Related work}
\label{sec:related}

\subsection{Self-organizing lists and caching}
\label{sec:related-mtf}

The move-to-front (MTF) rule maintains a list of $n$ records; a requested record, drawn with
probability $p_i$ independently of the past, is moved to the front. The stationary distribution
of the resulting chain on permutations is the Plackett--Luce distribution \citep{hendricks1972},
so the stationary MTF list is an \ICM{} finishing order with $p_i = \stack{i} / \stacksum$, and
the depth of a record, its \emph{search cost}, is its place minus one. Two performance measures
of self-organizing search are special cases of the \ICM{}. With linear prizes, the equity is an
affine function of the expected search cost, whose average over requests \citet{mccabe1965} gave
in closed form. With unit prizes for the top $C$ places, it is the item's hit probability of a
least-recently-used (LRU) cache of capacity $C$.

The distribution of the search cost was first derived by \citet{burville1973}, as an alternating
sum over subsets of records. The integral representation appeared in two steps.
\citet{flajolet1992} obtained, by formal-language methods, an integral form for the search-cost
distribution and the LRU fault probability, and proposed evaluating it by numerical integration.
\citet{fill1996holst} rederived it by embedding the chain in Poisson time, where the search
cost of a record, conditioned on its exponential age, is Poisson-binomial. Their Theorem~1
divides the full product by one record's factor inside the integral, which is the leave-one-out
operation of \cref{sec:deconvolution} at the level of generating functions;
\cref{thm:representation} is its per-record statement.
The representation has since served exact and asymptotic analyses of MTF and LRU
\citep{fill1996limits,jelenkovic1999,long2026}, always for aggregate quantities;
none of the works cited here computes the placement distributions of all records to a
controlled accuracy.

\subsection{Poisson-binomial computation and quadrature}
\label{sec:related-pb}

At a fixed node the coefficients $\coef{r}$ form a Poisson-binomial distribution;
\citet{hong2013} and \citet{biscarri2018} survey the methods for computing it. Leave-one-out
quantities arise in conditional Bernoulli sampling, where \citet{chen1997} compute them for all
elements through power sums. The direction rule of \cref{thm:stability} is classical in
polynomial deflation \citep{peters1971} but, to our knowledge, has not been recorded for this
distribution.
Double-exponential transformations \citep{takahasi1974,mori2001,trefethen2014} are usually
applied adaptively to one integral at a time; here the step and the window are fixed once for a
class of inputs, and \cref{sec:ablations} compares the resulting lattice with Gauss rules as a
node layout.

\subsection{Plackett--Luce rank marginals and the \ICM{} in practice}
\label{sec:related-pl}

\begin{table}[t]
  \centering
  \footnotesize
  \setlength{\tabcolsep}{4pt}
  \renewcommand{\arraystretch}{1.1}
  \caption{Design choices of the three numerical implementations of the representation;
  \cref{sec:related-pl} discusses \MPL{} and \LAQI{}. $K$ is the number of places returned by \MPL{}.}
  \label{tab:comparison}
  \begin{tabular}{@{}>{\raggedright\arraybackslash}p{2.0cm}>{\raggedright\arraybackslash}p{4.25cm}>{\raggedright\arraybackslash}p{4.25cm}>{\raggedright\arraybackslash}p{4.25cm}@{}}
    \toprule
    & \DEICM{} (this paper) & \MPL{} \citep{knyazev2026} & \LAQI{} \citep{wolters2026} \\
    \midrule
    output & $n \times m$ matrix and equities & top-$K$ matrix; first place analytic & equities, rescaled to the prize pool \\
    cost per node & $O(n^{2})$ & $O(nK^{2})$ & $O(nm)$ \\
    \midrule
    variable & $x$ with $t = \exp(x - e^{-x})$ & Gumbel, $-\log t$ & $u = \log(1 + t)$ \\
    rule & trapezoidal on a lattice & Gauss--Legendre & composite Gauss--Legendre \\
    nodes & fixed for the admissible inputs: set by $n$, $\rho$ and the target accuracy $\varepsilon$ & per instance: a window from the range of the scores with quantile margin $l$ & per instance: upper limit from the smallest stack and a tail tolerance; $32$ panels \\
    leave-one-out & direction-switched deconvolution with clamp & cached prefix and suffix products & prefix states and adjoint pass \\
    arithmetic & subtractions; absolute bound (\cref{thm:stability}) & sums of non-negative terms & sums of non-negative terms \\
    \bottomrule
  \end{tabular}
\end{table}

Placement probabilities of Plackett--Luce policies serve as propensities in off-policy evaluation
and determine exposures in fair ranking \citep{oosterhuis2021}, and they have been estimated by
sampling rankings \citep{kool2019,buchholz2022}. \citet{knyazev2026} (\MPL{}) compute them by
numerical integration of the Poisson-binomial representation, derived from the Gumbel formulation,
for the top $K$ places, with $K \le 30$ in their experiments.
The window of their Gauss--Legendre rule is set from the range of the scores and grows with their
spread. They set the division recurrence aside because avoiding division ``leads to high numerical
stability,'' and measure accuracy against a $10^{7}$-sample Monte Carlo oracle. \Cref{tab:comparison} sets their
design and that of \LAQI{} against \DEICM{}, and \cref{sec:ablations} measures both quadrature
rules under our stack-ratio bound.

For the \ICM{} itself, exact calculators stop at a few dozen players, Monte Carlo simulation
\citep{streib2011} is the reference for validation, and commercial tools use undisclosed
approximations \citep{hrc2020}. In game-theoretic analyses of tournament play, the payoff of a hand
is the value of the resulting stacks: \citet{miltersen2007} compute it by solving the recursive
game, as the probability of winning the heads-up tournament, and \citet{ganzfried2008} start from
the \ICM{} and refine it by iteration.
The \ICM{} itself has been examined as a model by \citet{diaconis2022} and \citet{kim2025}.

\citet{wolters2026} (\LAQI{}) published a deterministic method on the same representation,
which was reached through the storage-and-search connection mentioned in the GTO Wizard post
\citep{tombos2024}. It returns equities only, and its comparison with exact values covers fields of
up to 13 players; larger fields are checked by self-convergence and against Monte Carlo. Its adjoint
pass obtains all equities from forward prefix states at $O(nm)$ per node, and on the exp-exp grid it
can replace the deconvolution of \cref{sec:allplayers} when only equities are needed.
A fixed grid also permits reuse of intermediate quantities, including adjoint states, that depend
only on stacks and prizes unchanged between instances, provided the instances use a common
normalization scale and satisfy the bounds used to choose the grid.

%% file: sections/07-discussion.tex
\section{Discussion and conclusion}
\label{sec:discussion}

\DEICM{} computes the placement marginals of any Plackett--Luce distribution given as positive
weights, within the stack-ratio bound its grid was set for. It therefore applies directly to the
propensities and exposures of \cref{sec:related-pl}, where the weights are exponentiated scores,
and to the depth distributions of records under the move-to-front rule and the hit probabilities
of items in LRU caches, which are rows or partial sums of the same matrix.
Its per-node cost is $O(n^{2})$ regardless of the number of places requested; when only a few
places or only the equities are needed, the prefix--suffix products or the adjoint pass have lower
asymptotic cost on the same grid (\cref{sec:algorithm}).

Two limitations remain. Probabilities that are themselves small are accurate in absolute but not in
relative terms; where relative accuracy is required in the tails, the prefix--suffix products can
replace the deconvolution for the places concerned, at a cost that grows quadratically in the number
of places (\cref{sec:loo}).
And the \ICM{} is a model, whose adequacy is a separate question \citep{kim2025}.

\paragraph{Reproducibility.}
\Cref{alg:deicm} with the grid of \cref{sec:errorbudget} and \cref{app:constants} specifies the
method, and \cref{app:references,app:suite} the reference families and the validation suite.
This description is intended to support independent reimplementation and comparison against the same
reference families. The implementation used in the paper is independent of the production code;
neither is released.

\paragraph{Conclusion.}
The placement probabilities of the \ICM{}, and of any Plackett--Luce distribution, can be
computed for all players and all places at once, for fields of thousands of players, to nearly
the accuracy of double-precision arithmetic. Two pieces of numerical design make this possible
on top of a classical integral representation. The first is a double-exponential grid fixed once
for the admissible inputs, whose truncation errors are bounded in closed form and whose step
follows from the field size; the second is a leave-one-out deconvolution whose direction rule
and clamp deliver every player's coefficients from one product with an absolute error of $O(n
\ueps)$. Both were
tested on \NumInstances{} exactly solvable instances with up to \MaxN{} players, stack ratios up
to $10^{9}$ and a wide range of prize structures, and the errors stayed at the rounding floor
throughout.

%% file: sections/A-appendix.tex
\section{Proofs}
\label{app:proofs}

\subsection[Proof of the truncation proposition]{Proof of \cref{prop:truncation}}
\label{app:truncation}

All coefficients satisfy $0 \le \loo{i}{r}(t) \le 1$ and $\sum_r \loo{i}{r}(t) = 1$. The mass
of the density $\nstack{i} e^{-\nstack{i} t}$ on $[0, t_{\mathrm{L}})$ is $1 -
e^{-\nstack{i} t_{\mathrm{L}}} \le \nstack{i} t_{\mathrm{L}}$, which bounds the omitted
contribution to every place and to their sum. For $r \ge 1$ the coefficient $\loo{i}{r}(t)$ is
the probability that exactly $r$ other clocks are below $t$, which is at most the probability
that at least one is, $\sum_{j \ne i} (1 - e^{-\nstack{j} t}) \le \widetilde{S} t$; integrating
$\nstack{i} \widetilde{S} t$ over $[0, t_{\mathrm{L}})$ gives the bound $\nstack{i}
\widetilde{S} t_{\mathrm{L}}^{2} / 2$ for places $r + 1 \ge 2$. Since $\Q{i}{1} = \nstack{i} /
\widetilde{S}$ exactly, the omitted mass $\nstack{i} t_{\mathrm{L}}$ is a relative error of
at most $\widetilde{S} t_{\mathrm{L}}$ on $\Q{i}{1}$.

The mass of the density on $[t_{\mathrm{R}}, \infty)$ is $e^{-\nstack{i} t_{\mathrm{R}}}$,
which bounds the omitted contribution to every place. For the equities, let $H_i(t) = \sum_r
\prize{r+1} \loo{i}{r}(t)$ be the expected prize of player $i$ conditioned on $\clock{i} = t$.
The number of players ahead of $i$ is a sum of independent indicators $\ind{\clock{j} < t}$,
each non-decreasing in $t$, so it is stochastically non-decreasing in $t$, and since the prizes
are non-increasing in the place, $H_i$ is non-increasing. Hence the omitted part
$\int_{t_{\mathrm{R}}}^{\infty} \nstack{i} e^{-\nstack{i} t} H_i(t)\, \mathrm{d}t$ is at most
$H_i(t_{\mathrm{R}})\, e^{-\nstack{i} t_{\mathrm{R}}}$, while the retained part is at least
$H_i(t_{\mathrm{R}}) (1 - e^{-\nstack{i} t_{\mathrm{R}}})$; their ratio is at most $2\,
e^{-\nstack{i} t_{\mathrm{R}}}$ when $e^{-\nstack{i} t_{\mathrm{R}}} \le 1/2$. \qed

\paragraph{Truncating the lattice sum.}
The proposition bounds the tails of the integral, whereas the quadrature omits lattice terms;
to bound those, write $D_i(x)=\nstack{i}e^{-\nstack{i}\dexp(x)}\dexp'(x)$.
For the normalization, window and steps of this paper, $D_i$ is increasing to the left of
$\xL$ and decreasing to the right of $\xR-\step$. Indeed, with $z=e^{-x}$ and
$a=\nstack{i}\dexp(x)$,
\[
  \frac{\mathrm{d}}{\mathrm{d}x}\log D_i(x)
  =1+z-\frac{z}{1+z}-a(1+z),
\]
which has the claimed signs in those tails. Since $0\le\loo{i}{r}\le1$, comparison
of each omitted term with the integral over its neighboring interval gives
\[
  \step\sum_{k<0} D_i(\xL+k\step)\loo{i}{r}(\dexp(\xL+k\step))
  \le \nstack{i}t_{\mathrm{L}},\qquad
  \step\sum_{k\ge0} D_i(\xR+k\step)\loo{i}{r}(\dexp(\xR+k\step))
  \le e^{-\nstack{i}\dexp(\xR-\step)}.
\]
For $r\ge1$, the increasing left-tail envelope
$\nstack{i}\widetilde{S}\dexp(x)\dexp'(x)$ gives the sharper bound
$\nstack{i}\widetilde{S}t_{\mathrm{L}}^2/2$ as in the proposition.
On the right, $D_i(x)H_i(\dexp(x))$ is also decreasing, so the relative equity bound
holds with $t_{\mathrm{R}}$ replaced by $\dexp(\xR-\step)$.
Even at the largest step $1/8$, the resulting absolute probability bound is below
$2\times10^{-17}$ for every admissible stack.

\paragraph{The last place.}
Almost all of the omitted mass belongs to the last place, and the \emph{relative} error of a
last-place probability can exceed the bound of the proposition by a factor of order $n$. For
$t \ge t_{\mathrm{R}}$ every other player $j$ satisfies $\behind{j}(t) \le
e^{-\nstack{\min} t_{\mathrm{R}}}$, so the probability that not all of them are ahead of $i$ is
at most $\sum_{j \ne i} \behind{j}(t) \le n\, e^{-\nstack{\min} t_{\mathrm{R}}}$; this is the
fraction of the omitted mass that can fall on places other than the last. Let $t_0 = \log(2n) /
\nstack{\min}$, so that $\sum_{j \ne i} \behind{j}(t) \le 1/2$ for $t \ge t_0$. Then
\[
  \Q{i}{n} \;\ge\; \int_{t_0}^{\infty} \nstack{i} e^{-\nstack{i} t} \prod_{j \ne i} \bigl( 1 -
  \behind{j}(t) \bigr)\, \mathrm{d}t
  \;\ge\; \tfrac12 \int_{t_0}^{\infty} \nstack{i} e^{-\nstack{i} t}\, \mathrm{d}t
  \;=\; \tfrac12\, e^{-\nstack{i} t_0},
\]
and the ratio of the omitted mass to this lower bound is at most $2\, e^{-\nstack{i}
(t_{\mathrm{R}} - t_0)} \le 2\, e^{-\nstack{\min} (t_{\mathrm{R}} - t_0)} = 4n\,
e^{-\nstack{\min} t_{\mathrm{R}}}$, where $\nstack{\min} \ge 1 / (c\rho)$ under
\cref{eq:ratio-bound,eq:normalization}. At the right end $\xR = 27.5$ of the window of
\cref{sec:errorbudget} this bound is $1.4 \times 10^{-15}$ for $n = 4{,}000$.

\subsection[Proof of the width lemma]{Proof of \cref{lem:width}}
\label{app:width}

Since $e^{a} - 1 = (1 - q) / q$, we have $2 g(a_j) = a_j^{2} \behind{j} / (1 - \behind{j})$.
By the Cauchy--Schwarz inequality,
\[
  \mu'^{2} = \Bigl( \sum_{j \ne i} a_j \sqrt{\tfrac{\behind{j}}{1 - \behind{j}}} \cdot
  \sqrt{\behind{j} (1 - \behind{j})} \Bigr)^{2}
  \;\le\; \sum_{j \ne i} \frac{a_j^{2} \behind{j}}{1 - \behind{j}} \cdot \sum_{j \ne i} \behind{j} (1 - \behind{j})
  \;=\; 2 \sum_{j \ne i} g(a_j) \cdot v ,
\]
with equality if and only if $a_j / (1 - \behind{j})$ is constant in $j$, that is, if and
only if all $a_j$ are equal, since $a / (1 - e^{-a})$ is strictly increasing. \qed

\subsection{Discretization on a logarithmic grid}
\label{app:strip}

The prediction of \cref{eq:kappa} rests on the Gaussian picture of the bumps. A rigorous bound with the same
exponential dependence follows from the strip theorem for the trapezoidal rule \citep[Theorem
5.1]{trefethen2014}: if $f$ is analytic in the strip $|\operatorname{Im} x| < d$, tends to zero
uniformly as $|\operatorname{Re} x| \to \infty$ in the strip, and $\int_{-\infty}^{\infty}
|f(\xi + iy)|\, \mathrm{d}\xi \le N$ for all $|y| < d$, then the trapezoidal sum $I_\step[f] =
\step \sum_{\ell \in \mathbb{Z}} f(\ell \step)$ satisfies $|I_\step[f] - \int f| \le 2N /
(e^{2\pi d / \step} - 1)$. We apply it to the integrand of place $r + 1$ of player $i$ on the
lattice uniform in $\log t$, $f(x) = f_{i,r}(x) = \nstack{i} e^{-\nstack{i} e^{x}}\, \loo{i}{r}
(e^{x})\, e^{x}$. The exp-exp grid approaches uniform spacing in $\log t$ on its logarithmic branch;
the remark at the end of this section concerns the other branch.

\begin{lemma}[One factor]
  \label[lemma]{lem:factor}
  For $a \ge 0$ and $|y| < \pi/2$, let $q = \exp(-a e^{iy})$. Then $|q| + |1 - q| \le
  \exp(\gmax \tan^{2} y)$, with $\gmax = \max_{b > 0} b^{2} / (2 (e^{b} - 1)) \approx 0.324$ as in
  \cref{lem:width}.
\end{lemma}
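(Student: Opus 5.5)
The plan is to reduce the complex bound to the same scalar function $g$ that appears in \cref{lem:width}, by expanding $|1-q|^{2}$ exactly and then linearizing twice.

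Write $c = \cos y > 0$ and $\tan y = s/c$ with $s = \sin y$. Set $u = a c \ge 0$, $p = e^{-u} = |q|$ and $\phi = a s = u \tan y$. Then $q = p\, e^{-i\phi}$, and expanding the modulus gives
\[
  |1 - q|^{2} = 1 - 2p\cos\phi + p^{2} = (1-p)^{2} + 2p\,(1 - \cos\phi).
\]
For $a = 0$ we have $q = 1$, the left side equals $1$, and there is nothing to prove, so assume $u > 0$ and hence $p < 1$.

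Next, use $1 - \cos\phi \le \phi^{2}/2$ to get $|1-q|^{2} \le (1-p)^{2} + p\phi^{2}$. Factor out $(1-p)^{2}$ and apply $\sqrt{1+x} \le 1 + x/2$. This gives
\[
  |1-q| \;\le\; (1-p) + \frac{p\,\phi^{2}}{2(1-p)} \;=\; (1-p) + \tan^{2}y\,\frac{u^{2}}{2}\,\frac{p}{1-p}.
\]
Since $p/(1-p) = 1/(e^{u}-1)$, the last term is exactly $g(u)\tan^{2} y$, with $g$ as in \cref{lem:width}.

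Adding $|q| = p$ to both sides, the terms $p$ and $1-p$ combine to $1$:
\[
  |q| + |1-q| \;\le\; 1 + g(u)\tan^{2}y \;\le\; 1 + \gmax\tan^{2}y \;\le\; \exp(\gmax\tan^{2}y).
\]
The final step uses $1 + x \le e^{x}$.

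The main obstacle is choosing the right linearization. A first attempt would write $1-q = \int_0^a e^{iy} e^{-b e^{iy}}\,\mathrm{d}b$ and bound it by $(1-p)/c$. As $a \to \infty$ that bound tends to $\sec y$, whose growth $1 + \tfrac12\tan^{2}y$ is too fast compared with $\exp(\gmax\tan^{2}y)$. The exact expansion of the modulus keeps the main part $1-p$ intact and isolates the phase contribution $p\phi^{2}$. This is what makes the constant come out as $\gmax$ rather than $1/2$.
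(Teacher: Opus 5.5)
Your proof is correct and is essentially the paper's argument. It uses the same exact expansion of $|1-q|^{2}$ (your $2p(1-\cos\phi)$ is the paper's $4\varrho\sin^{2}(\theta/2)$), the same two linearizations $1-\cos\phi\le\phi^{2}/2$ and $\sqrt{A^{2}+B}\le A+B/(2A)$, and the same identification of the remainder as $g(a\cos y)\tan^{2}y\le\gmax\tan^{2}y$.
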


\begin{proof}
  Write $\varrho = |q| = e^{-a \cos y} \in (0, 1]$ and $\theta = a \sin y$, so that $q =
  \varrho e^{-i\theta}$ and $|1 - q|^{2} = (1 - \varrho)^{2} + 4\varrho \sin^{2}(\theta / 2)$.
  If $\varrho = 1$ the claim is trivial. Otherwise $\sqrt{A^{2} + B} \le A + B / (2A)$ for $A >
  0$, $B \ge 0$, and $\sin^{2}(\theta / 2) \le \theta^{2} / 4$ give
  \[
    |1 - q| \;\le\; (1 - \varrho) + \frac{\varrho\, a^{2} \sin^{2} y}{2 (1 - \varrho)} ,
  \]
  so that, with $b = a \cos y > 0$ and $\varrho = e^{-b}$,
  \[
    |q| + |1 - q| \;\le\; 1 + \tan^{2} y \cdot \frac{b^{2}}{2 (e^{b} - 1)}
    \;\le\; 1 + \gmax \tan^{2} y \;\le\; \exp(\gmax \tan^{2} y) . \qedhere
  \]
\end{proof}

The bound is sharp as $y \to 0$, since $g(a) y^{2}$ is the exact second-order term of $\log(|q|
+ |1 - q|)$. The $\ell^{1}$ norm of the coefficients of a product of polynomials is at most the
product of the $\ell^{1}$ norms of the factors, and a factor $q + (1 - q) z$ has norm $|q| + |1
- q|$, so the lemma gives $\sum_r |\loo{i}{r}(t)| \le \exp((n - 1) \gmax \tan^{2} y)$ for $t =
\tau e^{iy}$ with $\tau > 0$ and $|y| < \pi/2$.

\begin{theorem}[Discretization]
  \label{thm:strip}
  For every player $i$, every place $r + 1$, every step $\step > 0$ and every $d \in (0,
  \pi/2)$,
  \begin{equation}
    \Bigl| I_\step[f] - \int_{-\infty}^{\infty} f(x)\, \mathrm{d}x \Bigr|
    \;\le\; \frac{2 \exp\bigl( (n - 1)\, \gmax \tan^{2} d \bigr)}{\cos d \;\bigl( e^{2\pi d / \step} - 1 \bigr)} .
    \label{eq:strip}
  \end{equation}
  With $d = \pi / ((n - 1) \gmax \step) \le 1/2$, that is, for $(n - 1) \step \ge 2\pi / \gmax
  \approx 19.4$, the right-hand side is at most
  \begin{equation}
    2.3\, \exp\Bigl( -\bigl( 1 - \eta(d) \bigr) \frac{\pi^{2}}{(n - 1)\, \gmax\, \step^{2}} \Bigr)
    \Bigl( 1 - e^{-2\pi d / \step} \Bigr)^{-1},
    \qquad \eta(d) = \frac{\tan^{2} d}{d^{2}} - 1 \le 0.194 .
    \label{eq:rate}
  \end{equation}
\end{theorem}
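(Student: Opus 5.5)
Apply the strip theorem of \citet[Theorem~5.1]{trefethen2014} to $f = f_{i,r}$. This requires analyticity in the strip $|\operatorname{Im} x| < d$, decay as $|\operatorname{Re} x| \to \infty$, and a uniform bound $N$ on $\int |f(\xi + iy)|\,\mathrm{d}\xi$. Once we show $N \le \exp((n-1)\gmax \tan^2 d)/\cos d$, the bound \cref{eq:strip} follows immediately, and \cref{eq:rate} is a calculation with the chosen $d$.

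\emph{Analyticity and decay.} For $x = \xi + iy$, write $t = e^{x} = \tau e^{iy}$ with $\tau = e^{\xi}$. Then $f(x) = \nstack{i}\, t\, e^{-\nstack{i} t}\, \loo{i}{r}(t)$. Each $\loo{i}{r}$ is a polynomial in the entire functions $e^{-\nstack{j} t}$, so $f$ is entire in $x$. The coefficient bound is the $\ell^1$ consequence of \cref{lem:factor}: for $|y| < \pi/2$,
\[
|\loo{i}{r}(t)| \le \sum_s |\loo{i}{s}(t)| \le \exp\bigl((n-1)\gmax \tan^2 y\bigr),
\]
uniformly in $\tau$. Hence $|f| \le \nstack{i} \tau e^{-\nstack{i}\tau\cos y} \exp((n-1)\gmax\tan^2 y)$. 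With $\cos y \ge \cos d > 0$, this tends to zero uniformly as $\xi \to \pm\infty$.

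\emph{The $L^1$ bound.} Integrate in $\xi$ using $\mathrm{d}\tau = \tau\,\mathrm{d}\xi$:
\[
\int \nstack{i}\tau e^{-\nstack{i}\tau\cos y}\,\mathrm{d}\xi = \int_0^\infty \nstack{i} e^{-\nstack{i}\tau\cos y}\,\mathrm{d}\tau = \frac{1}{\cos y}.
\]
So $N = \exp((n-1)\gmax\tan^2 d)/\cos d$ works for all $|y| < d$, since $\tan^2$ and $1/\cos$ are increasing on $[0, \pi/2)$. This gives \cref{eq:strip}.

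\emph{The rate.} Set $d = \pi/((n-1)\gmax \step)$; the hypothesis $(n-1)\step \ge 2\pi/\gmax$ ensures $d \le 1/2$. Write
\[
\frac{1}{e^{2\pi d/\step} - 1} = e^{-2\pi d/\step}\,\bigl(1 - e^{-2\pi d/\step}\bigr)^{-1}.
\]
The exponent of the bound becomes
\[
(n-1)\gmax \tan^2 d - \frac{2\pi d}{\step} = (n-1)\gmax d^2 (1+\eta(d)) - \frac{2\pi d}{\step}.
\]
Substituting $(n-1)\gmax d = \pi/\step$ gives
\[
\frac{\pi d}{\step}(1+\eta) - \frac{2\pi d}{\step} = -(1-\eta)\frac{\pi d}{\step} = -(1-\eta)\frac{\pi^2}{(n-1)\gmax\step^2}.
\]
The constant is $2/\cos d \le 2/\cos(1/2) \approx 2.28 \le 2.3$. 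Since $\tan d/d$ is increasing, $\eta(d) \le \eta(1/2) = (2\tan\tfrac12)^2 - 1 \approx 0.194$.

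The main obstacle is the uniform complex bound on the coefficients, but \cref{lem:factor} together with the submultiplicativity of the $\ell^1$ norm under polynomial multiplication already supplies it. What remains is bookkeeping: checking that the choice of $d$ is admissible and that the constants $2.3$ and $0.194$ come out as stated.
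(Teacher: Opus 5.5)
Your proof is correct and follows the paper's argument essentially step for step. The shared steps are: $f$ is entire; the $\ell^{1}$ coefficient bound from \cref{lem:factor} with the substitution $\tau = e^{\xi}$ gives $N = \exp((n-1)\gmax\tan^{2} d)/\cos d$; the strip theorem then yields \cref{eq:strip}; and the exponent computation at $d = \pi/((n-1)\gmax\step)$, together with $2/\cos(1/2) \le 2.3$, yields \cref{eq:rate}. Your explicit remark that $\tan d/d$ is increasing, so that $\eta(d) \le \eta(1/2) \approx 0.194$, supplies a step the paper leaves implicit.
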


\begin{proof}
  $f$ is entire. On the line $x = \xi + iy$ we have $e^{x} = e^{\xi} e^{iy}$, so
  $|e^{-\nstack{i} e^{x}}| = e^{-\nstack{i} e^{\xi} \cos y}$, $|e^{x}| = e^{\xi}$, and
  $|\loo{i}{r}(e^{x})| \le \exp((n - 1) \gmax \tan^{2} y)$ by the remark above. Substituting
  $\tau = e^{\xi}$,
  \[
    \int_{-\infty}^{\infty} |f(\xi + iy)|\, \mathrm{d}\xi
    \;\le\; \exp\bigl( (n - 1) \gmax \tan^{2} y \bigr) \int_{0}^{\infty} \nstack{i} e^{-\nstack{i} \tau \cos y}\, \mathrm{d}\tau
    \;=\; \frac{\exp\bigl( (n - 1) \gmax \tan^{2} y \bigr)}{\cos y},
  \]
  which is increasing in $|y|$, so $N = \exp((n - 1) \gmax \tan^{2} d) / \cos d$ serves for all
  $|y| < d$; the same bounds show that $f \to 0$ uniformly in the strip as $|\xi| \to \infty$.
  The strip theorem gives \cref{eq:strip}. For \cref{eq:rate}, the exponent at $d = \pi / ((n -
  1) \gmax \step)$ is $(n - 1) \gmax d^{2} (1 + \eta(d)) - 2\pi d / \step = -(1 - \eta(d))\,
  \pi^{2} / ((n - 1) \gmax \step^{2})$, and $2 / \cos(1/2) \le 2.3$.
\end{proof}

The theorem gives a bound of the form $A e^{-\kappa / ((n - 1) \step^{2})}$ with
$\kappa=(1-\eta(d))\pi^2/\gmax$ and
$A=2.3/(1-e^{-2\pi d/\step})$. It supports the exponential rate in the Gaussian
prediction, but does not establish its prefactor $2/n$ or an extremal input for the
quadrature error. The smaller prefactor in \cref{eq:kappa} uses the mass $1/n$ of an
equal-stack placement probability. The same bound holds for the equity
integrand $\sum_r \prize{r+1} f_{i,r}$ with the first-place prize as an extra factor. The theorem is
stated for the lattice uniform in $\log t$. Under $x \mapsto \dexp(x)$ the argument of $t$ on
the line $\operatorname{Im} x = y$ is $y + e^{-\operatorname{Re} x} \sin y$, which exceeds $y$
on the double-exponential branch, so \cref{lem:factor} does not apply there verbatim; the bumps
on that branch are those of the early places, which are broad (\cref{lem:width}), and the
compression $1 + e^{-x}$ of the map is included in the prediction of \cref{sec:errorbudget}.

\subsection[Rounding error of the full product]{Rounding error of the full product, \cref{eq:product-error}}
\label{app:product}

Write $x_j = \nstack{j} t$ and assume that $\exp$ and $\operatorname{expm1}$ are returned
with relative error at most $\ueps$. Including the rounding of $x_j$, the computed
$\behind{j} = e^{-x_j}$ has relative error at most $(x_j + 1)\ueps$, and
$1-\behind{j}$, evaluated as $-\operatorname{expm1}(-\operatorname{fl}(x_j))$, has relative
error at most $2\ueps$, since $x_j/(e^{x_j}-1) \le 1$.
The exact coefficients before a convolution step are non-negative and sum to one. Thus the errors
in these two factors contribute at most
$[(x_j+1)\behind{j}+2(1-\behind{j})]\ueps$ to the $\ell^1$ error.
The two multiplications together contribute at most $\ueps$, and the addition contributes
another $\ueps$ \citep[Chapter 3]{higham2002}. To first order, the error introduced by one
convolution step is therefore at most
\[
  \bigl[(x_j+1)\behind{j}+2(1-\behind{j})+2\bigr]\ueps
  = \bigl[4+(x_j-1)e^{-x_j}\bigr]\ueps
  \le (4+e^{-2})\ueps .
\]
Convolution with the exact factor, whose $\ell^1$ norm is one, does not amplify existing
errors. Summing over the $n$ steps gives \cref{eq:product-error}. \qed

\subsection[Proof of the stability theorem]{Proof of \cref{thm:stability}}
\label{app:stability}

Write $E_r$ for $\scaledloo{i}{r}$, $\widehat{E}_r$ for its computed value, $e_r =
\widehat{E}_r - E_r$, and $d_r = |\widehat{\coef{r}} - \coef{r}|$ for the input errors, with
$\sum_r d_r \le \theta$. All quantities below are non-negative in exact arithmetic, and we use
$\mult{i} E_{r-1} \le \coef{r}$ and $E_r \le \coef{r}$ from \cref{lem:invariant} throughout.
Terms of order $\ueps^{2}$ are dropped.

\emph{Multipliers.} Write $x = \nstack{i} t$. As in \cref{sec:fullproduct}, $\behind{i}$ and $1 -
\behind{i}$ are computed as $\exp(-\operatorname{fl}(x))$ and
$-\operatorname{expm1}(-\operatorname{fl}(x))$; the rounding of the argument and of the function
give $\widehat{\behind{i}}$ a relative error of at most $(x + 1)\ueps$ and the computed value of $1
- \behind{i}$ one of at most $2\ueps$, and a division adds $\ueps$. Hence $\widehat{\mult{i}} =
\mult{i}(1 + \theta_\alpha)$ and $\widehat{\beta_i} = \beta_i (1 + \theta_\beta)$ with
$|\theta_\alpha|, |\theta_\beta| \le (x + 4)\ueps$, where $\beta_i = 1/\mult{i} = \behind{i} / (1 -
\behind{i})$ is the multiplier of the backward recurrence, computed as a quotient in the same way
rather than by inverting $\widehat{\mult{i}}$.

\emph{Forward recurrence ($\behind{i} \ge 1/2$, so $x \le \log 2$).} One step computes
$\widehat{E}_r = \bigl( \widehat{\coef{r}} - \widehat{\mult{i}} \widehat{E}_{r-1} (1 +
\delta_1) \bigr) (1 + \delta_2)$ with $|\delta_1|, |\delta_2| \le \ueps$, so $e_r = -\mult{i}
e_{r-1} + \eta_r$ with
\[
  |\eta_r| \;\le\; d_r
  + |\theta_\alpha|\, \mult{i} E_{r-1}
  + \ueps\, \mult{i} E_{r-1}
  + \ueps\, E_r
  \;\le\; d_r + (\log 2 + 6)\ueps\, \coef{r}
  \;\le\; d_r + 7\ueps\, \coef{r} ,
\]
using $\mult{i} E_{r-1} \le \coef{r}$ and $E_r \le \coef{r}$. Unrolling, $e_r = \sum_{s \le r}
(-\mult{i})^{r-s} \eta_s$, and since $\mult{i} \le 1$ under the rule,
\[
  |e_r| \;\le\; \sum_{s \le r} |\eta_s| \;\le\; \theta + 7\ueps \sum_{s} \coef{s} \;=\; \theta +
  7\ueps ,
\]
which is \cref{eq:stability-bound}.

\emph{Backward recurrence ($\behind{i} < 1/2$, so $x > \log 2$).} Here $\beta_i < 1$. One
step computes $\widehat{E}_{r-1} = \widehat{\beta_i} (\widehat{\coef{r}} - \widehat{E}_r)(1 +
\delta_1)(1 + \delta_2)$, so $e_{r-1} = -\beta_i e_r + \eta_{r-1}$ with
\[
  |\eta_{r-1}| \;\le\; \beta_i d_r + \beta_i (\coef{r} - E_r)
  \bigl( |\theta_\beta| + |\delta_1| + |\delta_2| \bigr)
  \;\le\; d_r + (x + 6)\ueps\, E_{r-1} ,
\]
using $\beta_i < 1$ and $\beta_i (\coef{r} - E_r) = E_{r-1}$. Now $E_{r-1} \le \beta_i
\coef{r} \le 2 e^{-x} \coef{r}$, and $2 e^{-x} (x + 6) \le 7$ for $x \ge \log 2$, so
$|\eta_{r-1}| \le d_r + 7\ueps\, \coef{r}$; unrolling with $\beta_i < 1$ gives again
$|e_{r-1}| \le \theta + 7\ueps$.

\emph{Clamp and wrong direction.} The clamp of \cref{sec:clamp} replaces $\widehat{E}_r$ by its
projection onto $[0, \widehat{\coef{r}}]$; since $E_r \in [0, \coef{r}]$, the projection can
increase the error by at most $d_r$ at that step. More precisely, for any tentative value $y$,
\[
 |\operatorname{proj}_{[0,\widehat{\coef{r}}]}(y)-E_r|
 \le |y-E_r|+d_r,
\]
by non-expansiveness of projection and comparison with $[0,\coef{r}]$.
Applied at every step, this adds at most $\sum_r d_r\le\theta$ to the preceding
unrolled bounds. Thus the unclamped recurrence satisfies \cref{eq:stability-bound},
and the clamped recurrence satisfies the conservative bound
$2\theta+7\ueps+O(\ueps^2)$. If the
direction is chosen against the rule, the propagation factor has modulus $\max(\mult{i},
1/\mult{i}) > 1$ at every step, and the unrolled sum amplifies $\eta_s$ by that factor to the
power of the number of steps between $s$ and $r$.
\qed

\subsection{Rounding errors in the far tail}
\label{app:fartail}

\begin{figure}[b]
  \centering
  \IfFileExists{figures/atlas.pdf}{\includegraphics[width=0.92\linewidth]{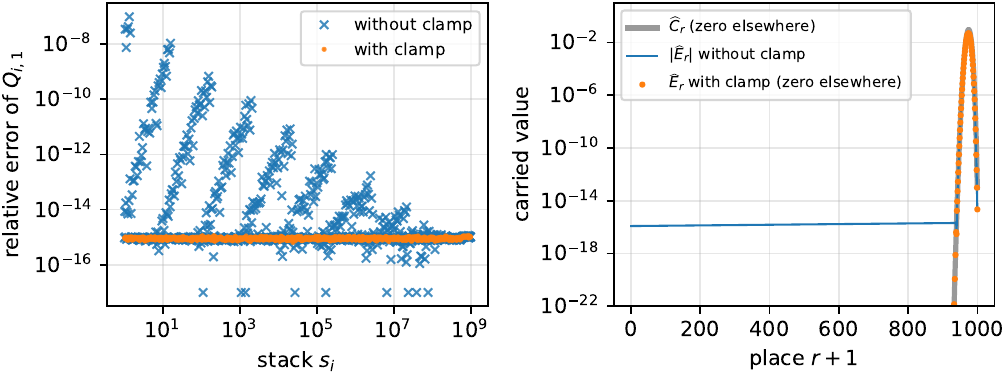}}{\todo{figure pending}}
  \caption{Left: relative errors of the first-place probabilities $\Q{i}{1} =
  \stack{i}/\stacksum$ of a $1{,}000$-player field with a geometric stack progression of ratio
  $10^{9}$, against the stack, without and with the clamp, on the grid of \cref{sec:errorbudget}.
  Right: the carried values $E_r=\scaledloo{i}{r}$ of the backward recurrence at the player--node
  pair whose residue contributes most, relative to $\Q{i}{1}$, to the errors on the left.
  Zeros are not drawn on the logarithmic scale.}
  \label{fig:atlas}
\end{figure}

To isolate the mechanism of \cref{sec:clamp}, first consider a field with one stack far
below all the others. Let $i$ be the player with the
smallest normalized stack $\nstack{\min}$ and consider a node with $t$ slightly above $\log 2
/ \nstack{\min}$, so that $\behind{i} = 1/2 - \epsilon$ and $i$ is the first player of the
backward group. Suppose the other stacks are large enough that every other
$\behind{j} = e^{-\nstack{j} t}$ has underflowed to zero. Setting those probabilities to zero, $\coef{n} = 1 - \behind{i}$, $\coef{n-1} =
\behind{i}$, and $\coef{r} = 0$ for $r < n - 1$ in exact arithmetic, and
$E_r = \scaledloo{i}{r}$ is $\behind{i}$ for $r = n-1$ and zero otherwise. The backward step
from $r = n-1$ to $r = n-2$ computes $(\widehat{\coef{n-1}} - \widehat{E}_{n-1}) / \mult{i}$,
where both terms equal $\behind{i}$ up to rounding, so the result is a residue $\eta$ with
$|\eta| \lesssim \ueps$ instead of zero; each further step multiplies the carried value by
$1/\mult{i} = \behind{i} / (1 - \behind{i}) = 1 - O(\epsilon)$, so $\widehat{E}_0 \approx \pm
\eta\, (1 - O(\epsilon))^{n}$ is still of order $\ueps$ when $n \epsilon \lesssim 1$.
The crossing of $\behind{i}$ through $1/2$ lies inside the window; whether a node falls
within $\epsilon \lesssim 1/n$ of it depends on the grid. With the clamp, the value carried
from $r=n-1$ to $r=n-2$ is projected onto $[0,\widehat{\coef{n-2}}]=\{0\}$,
and nothing propagates.

The contribution of such a node to $\widehat{Q}_{i,1}$ is $\nstack{\min} w_\ell \widehat{E}_0$
with $w_\ell = \step\, t_\ell (1 + e^{-x_\ell}) \approx \step\, t_\ell$ on the logarithmic
branch of the map and $t_\ell \approx \log 2 / \nstack{\min}$, hence of order $\step\, \ueps$
in absolute terms. The true value is $\Q{i}{1} = \nstack{\min} / \widetilde{S}$, so the
relative error is of order $\ueps \step\, \widetilde{S} / \nstack{\min}$. Here $\widetilde{S}
/ \nstack{\min} = \stacksum / s_{\min}$ lies between the stack ratio $s_{\max} / s_{\min}$ and
$n$ times it, hence at most $n\rho$ under \cref{eq:ratio-bound}. This is the law quoted in
\cref{sec:clamp}.

The geometric progression of \cref{fig:atlas} illustrates the same mechanism without
requiring all other factors to underflow: the residual propagates below the coefficient
bump, where the clamp sets it to zero. Here $\stacksum / s_{\min}
\approx 50\rho$, and with $\rho = 10^{9}$ and $\step = 1/32$ it predicts relative errors of
order $10^{-7}$, consistent with the largest errors observed without the clamp in \cref{fig:atlas}
and over the extremal set in \cref{sec:loo}.

\section{Error budget details}
\label{app:error}

\begin{table}[b]
  \centering
  \small
  \caption{Measured worst absolute error of placement probabilities over the extremal set of
  \cref{sec:accuracy}, by step and field size, on the window $[-3.75, 27.5)$; the
  probability counterpart of \cref{tab:step}. Shaded cells are at the rounding floor
  (below $5 \times 10^{-14}$). Dashes indicate untested combinations.}
  \label{tab:sweeph}
  \IfFileExists{tables/sweep_h_absQ.tex}{\input{tables/sweep_h_absQ}}{\todo{table pending}}
\end{table}

\begin{table}[b]
  \centering
  \small
  \caption{The prediction of \cref{eq:kappa} with $\kappa = 28$, for every cell of \cref{tab:sweeph}
  that is above the floor. Shaded cells are at the floor in the measurement.}
  \label{tab:sweeph-pred}
  \IfFileExists{tables/sweep_h_pred.tex}{\input{tables/sweep_h_pred}}{\todo{table pending}}
\end{table}

\Cref{tab:sweeph} gives the worst absolute error of a placement probability over the extremal
set of \cref{sec:accuracy} as a function of the step and of the field size, and
\cref{tab:sweeph-pred} the prediction of \cref{eq:kappa} with $\kappa = 28$ for every cell above
the rounding floor. The measurements are within a factor \PredRatioMin{} to \PredRatioMax{} of
the prediction wherever $(n - 1) \step \ge 20$, the regime in which the Gaussian bumps
dominate. At coarser steps in small fields, another error scale matters. On a uniform $\log t$
grid, a lone exponential density $\nstack{i} e^{-\nstack{i} t}$ has leading aliasing amplitude
$2\,|\Gamma(1+2\pi i/\step)| \sim (4\pi/\sqrt{\step}) e^{-\pi^2/\step}$ as $\step\to0$,
independent of $n$: about $5\times10^{-8}$ at $\step=1/2$ and $2\times10^{-16}$ at $1/4$.
In a field the density is also multiplied by coefficients whose factors $1 - e^{-\nstack{j} t}$ have
tails of the same exponential kind, so this figure understates the error.
The measured errors of small fields set the cap at $1/8$.

\begin{figure}[t]
  \centering
  \IfFileExists{figures/sweep_window.pdf}{\includegraphics[width=0.8\linewidth]{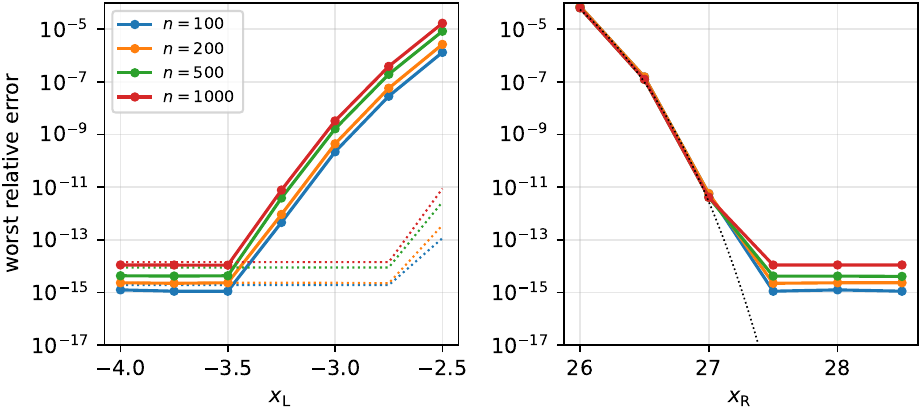}}{\todo{figure pending}}
  \caption{Worst relative error of equities over the extremal set at each size $n \le 1{,}000$ (solid)
  as each window endpoint is varied: $\xL$ in the left panel and $\xR$ in the right panel.
  The other endpoint and the step are fixed at the values of \cref{sec:errorbudget}.
  Dotted colored lines on the left: the same with the first place taken in closed form, $\Q{i}{1} =
  \stack{i} / \stacksum$. Black dotted line on the right: the omitted density mass of
  \cref{prop:truncation}, $e^{-\nstack{\min} t_{\mathrm{R}}}$, for the continuous integral.}
  \label{fig:window}
\end{figure}

\paragraph{Window.}
\Cref{fig:window} varies the window at the step of \cref{sec:errorbudget}. The error stays at
the floor while the endpoints satisfy \cref{prop:truncation} and rises as soon as they do not:
on the left by the amount $\widetilde{S} t_{\mathrm{L}}$ the proposition predicts, on the right
abruptly, as the window cuts into the density of the smallest stack. The left condition is the
one of the first place; with the first place taken in closed form the left end can move from
$-3.75$ to $-2.75$, about $3\%$ of the nodes, and the window is then set by the second place.
On the right the error approximately follows the mass $e^{-\nstack{\min} \dexp(x_{\mathrm{R}} -
\step/2)}$ omitted beyond $x_{\mathrm{R}} - \step/2$, corresponding to an effective end of the
trapezoidal sum: $4 \times 10^{-12}$ at $\xR = 27.0$ for $\step = 1/32$ and $6 \times 10^{-12}$ for
$\step = 1/16$, within the lattice bound of \cref{app:truncation}; the floor is reached at $27.5$.
The relative error of the last-place probability of a minimal stack that shares the last place with
$n - 1$ others, which can be $n$ times larger (\cref{app:truncation}), is $4 \times 10^{-9}$ at
$\xR = 27.0$ for $n = 1{,}000$ and reaches its floor at $\xR = 27.5$ as well (not shown).

\section{Numerical settings}
\label{app:constants}

The settings cover $2 \le n \le 4{,}000$, the prizes of \cref{sec:icm}, and the
stack-ratio bound of \cref{eq:ratio-bound} with $\rho = 10^{9}$. Inputs are represented in
double precision, with the prizes scaled so that the equities lie in the normal range of the format. The stacks are sorted and normalized by $20\, s_{\max}$, so that
$\nstack{i} \in [1/(20\rho), 1/20]$.

The grid is the exp-exp lattice $t = \exp(x - e^{-x})$
with the trapezoidal weights $\step\, t\, (1 + e^{-x})$ on $[\xL, \xR) = [-3.75, 27.5)$, with
$\step = 1/8$ for $2\le n \le 50$, $1/16$ for $50<n \le 200$, $1/32$ for
$200<n \le 1{,}000$ and $1/64$ for $1{,}000<n \le 4{,}000$
($\nodes = 250$, $500$, $1{,}000$, $2{,}000$).
These fixed ranges are guided by the error prediction and numerical validation.

The first place is integrated like the others. The direction rule
switches at $\behind{i} = 1/2$, the clamp interval is $[0, \widehat{\coef{r}}]$ (index shifted by
one in the backward direction), and the ahead probabilities are evaluated as
$-\operatorname{expm1}(-\nstack{j} t)$.

The reference implementation used for \cref{sec:results}
processes the nodes in blocks of $32$ (\cref{sec:implementation}) and evaluates the blocks in
parallel with a fixed assignment of nodes to threads and a fixed order of reduction.

\section{Exact reference families}
\label{app:references}

\Cref{tab:references} lists the families of \cref{sec:references}. Below, $P_{r}(X)$ denotes
the probability that after $r$ placements the set of placed players is $X$, and the
\emph{next-finisher rule} is \cref{eq:icm}: the next place goes to a remaining player $j$ with
probability $\stack{j} / \sum_{j' \notin X} \stack{j'}$. All dynamic programs start with unit probability on the empty state, iterate this
rule over $r = 0, \dots, m-1$ on a compressed state, and accumulate $\prize{r+1}$ times the
probability that the queried player is placed at step $r+1$.

\begin{table}[b]
  \centering
  \small
  \caption{Families of exactly solvable instances used as references in \cref{sec:results}.
  $g$ is the number of distinct stack values and $e$ the number of exceptional stacks.
  DP denotes dynamic programming; the references are
  evaluated in double-double arithmetic.}
  \label{tab:references}
  \begin{tabular}{@{}lll>{\raggedright\arraybackslash}p{5.6cm}@{}}
    \toprule
    Family & Stacks & Prizes & Reference \\
    \midrule
    winner-take-all & arbitrary & $m = 1$ & $\Q{i}{1} = \stack{i}/\stacksum$ \\
    two prizes & arbitrary & $m = 2$ & closed form, $O(n^{2})$ \\
    flat (satellite) & arbitrary & $\prize{r} \equiv \pi$, $m = n$ & $\V{i} = \pi$ \\
    linear & arbitrary & $\prize{r} = n - r + 1$ & $\V{i} = n + \tfrac12 - \sum_{j} \stack{j}/(\stack{i} + \stack{j})$ \citep{mccabe1965,burville1973} \\
    equal stacks & $\stack{i} \equiv s$ & arbitrary & $\V{i} = \tfrac{1}{n}\sum_r \prize{r}$ \\
    two stack sizes & $g = 2$ & arbitrary & DP on the count of placed type-1 players, $O(m \min(n_1, n_2))$ \\
    equal with exceptions & $e \le 10$ exceptional & arbitrary & DP on the set of placed exceptional players, $O(m\, e\, 2^{e})$ \\
    small fields & $n \le 20$ & arbitrary & DP on the set of placed players, $O(n\, 2^{n})$ \\
    \bottomrule
  \end{tabular}
\end{table}

\paragraph{Closed forms \citep{burville1973}.}
The first two places follow from the next-finisher rule, $\Q{i}{1} = \stack{i} / \stacksum$
and $\Q{i}{2} = \sum_{j \ne i} (\stack{j} / \stacksum) \cdot \stack{i} / (\stacksum -
\stack{j})$, in $O(n)$ per player. With flat prizes every player receives $\pi$; with equal
stacks every equity is the mean prize. For linear prizes $\V{i} = n + 1 - \Ex[\text{place of
} i]$, and under the exponential race $\Prob(\clock{j} < \clock{i}) = \stack{j} / (\stack{i}
+ \stack{j})$, so $\Ex[\text{place of } i] = 1 + \sum_{j \ne i} \stack{j} / (\stack{i} +
\stack{j})$, which yields the formula of \cref{tab:references} (the term $j = i$ contributes
$\tfrac12$); mixed over the requested record this is McCabe's expected search cost
\citep{mccabe1965,fill1996holst}.

\paragraph{Dynamic programs.}
When the stacks take $g$ distinct values with multiplicities $n_1, \dots, n_g$, the
next-finisher rule depends only on how many players of each value have been placed, so the
state after $r$ placements is a vector $(x_1, \dots, x_g)$ with $\sum_h x_h = r$.
The next placement increments $x_h$ with probability
$\stack{(h)}(n_h-x_h)/\sum_{h'}\stack{(h')}(n_{h'}-x_{h'})$. A given
player of value $h$ is still unplaced in state $x$ with probability $(n_h - x_h) / n_h$ by
exchangeability within the group, and is then placed next with probability $\stack{(h)} /
\sum_{h'} \stack{(h')} (n_{h'} - x_{h'})$, so $\Q{i}{r+1}$ is the sum of $P_r(x)$ times these
two factors over the states of layer $r$. Iterating over $r < m$ costs $O(m \min(n_1, n_2))$
for two values, the case used in the suite. When all
but $e$ players have the same stack, the state is the set $X$ of placed exceptional players
together with the count $r - |X|$ of placed regular players, and the iteration costs
$O(m\, e\, 2^{e})$. For arbitrary small fields the state is the set of placed players; the
layers $r = 0, \dots, m-1$ contain at most $2^{n}$ subsets in total, so the iteration costs $O(n\, 2^{n})$
in total. The references are evaluated in double-double arithmetic (\cref{sec:references});
in plain double precision their own rounding error, of order $10^{-12}$ for the largest
instances, would exceed the errors being measured.

\section{Validation suite}
\label{app:suite}

The suite of \cref{sec:references} uses the field sizes $n \in \{3, 10, 30, 100, 200, 500, 1{,}000,
2{,}000, 4{,}000\}$.

\smallskip
\noindent\textbf{Stacks.}
We use equal stacks; linear stacks, $\stack{i} = i$; geometric
progressions from $1$ to $10^{2}$, $10^{4}$, $10^{6}$ and $10^{9}$; Zipf, $\stack{i} \propto
i^{-1.5}$, and log-normal draws with $\sigma = 2$, both rescaled in the logarithm to ratio
$10^{9}$ when they exceed it; one stack $10^{9}$ times smaller, or larger, than the $n - 1$
others; two stack sizes, with one player, half the field, or all but one player at stack $1$
and the rest at $3$, $10^{3}$ or $10^{9}$; and ten exceptional stacks among regular stacks of
$1$: spread over the ratio $\rho$, $\rho^{-1/2 + k/9}$ for $k = 0, \dots, 9$; all below the
regular stack, $\rho^{-1/2 + k/20}$; and all above, $\rho^{1/20 + k/20}$.

\smallskip
\noindent\textbf{Prizes.}
The structures are winner-take-all; two prizes with $\prize{2} / \prize{1} = 0.5$ and $10^{-6}$; flat over all $n$
places; linear, $\prize{r} = n - r + 1$; a realistic payout $\prize{r} \propto r^{-0.8}$ and a
top-heavy payout $\prize{r} = 0.7^{\,r-1}$, both for the top $\lceil 0.15n \rceil$ places; and
satellites paying the top half of the field, the top $80\%$, or all but the last player,
equally. Each stack generator is paired with the prize structures for which
\cref{tab:references} supplies an exact reference, and the small-field family ($n \le 20$)
with all of them.

\smallskip
\noindent\textbf{Runtime measurements.}
These use log-normal stacks with $\sigma = 1$, ratio capped at $10^{6}$, the full prize vector or
the top $0.15n$ places paid, at $n = 100$, $1{,}000$, $2{,}000$ and $4{,}000$, and report the
median of three runs.

\smallskip
\noindent\textbf{Random draws.}
The log-normal draws come from a 64-bit
Mersenne twister seeded with $12345 + n$ in the suite and $99 + n$ in the runtime measurements,
passed through the normal distribution of the C++ standard library; that transform is not fixed
by the standard, so the draws are reproduced exactly only with the GNU library used here, and in
distribution with any other.

\smallskip
\noindent\textbf{Extremal set.}
The sweeps of \cref{sec:results} run on $25$ instances per field size, $n = 100$ to $4{,}000$,
chosen from the error budget. For the discretization error: equal stacks with the full matrix
and with prize steps at $80\%$ and $50\%$ of the field, at a realistic payout and at the
bubble; two-stack-size fields whose largest equal group has $n - 1$ or $n / 2$ members at ratio
$10^{3}$; and, because the relative error of an equity can be large for a player whose equity is
a single narrow transition of the group, a tiny stack, or the ten exceptions below the group,
among equal stacks under the $80\%$ satellite. For the truncation error: the geometric
progression of ratio $10^{9}$, a single outlier of ratio $10^{9}$ in either direction,
two-stack-size fields at ratio $10^{9}$ and ten exceptions spread over the ratio, with the
prizes that expose the first, the second and the last place.

At twice the step of \cref{sec:errorbudget}, the largest absolute probability errors over
the full suite and the extremal set agree exactly for all six tested field sizes. The
largest relative equity errors agree to within $0.2\%$. In the full suite, the largest
absolute probability error is attained by equal stacks with flat prizes. The largest
relative equity error is attained under the $80\%$ satellite by equal stacks at $n=100$,
and by equal stacks with a tiny stack or ten smaller exceptions at larger $n$.

%% file: tables/sweep_h_absQ.tex
\begin{tabular}{@{}rrcccccc@{}}
\toprule
$1/\step$ & $\nodes$ & $n=100$ & $n=200$ & $n=500$ & $n=1{,}000$ & $n=2{,}000$ & $n=4{,}000$ \\
\midrule
8 & 250 & $4.4 \times 10^{-10}$ & $1.1 \times 10^{-6}$ & $1.0 \times 10^{-4}$ & $3.6 \times 10^{-4}$ & -- & -- \\
12 & 375 & \cellcolor{floorshade}$6.7 \times 10^{-16}$ & $1.7 \times 10^{-11}$ & $1.1 \times 10^{-6}$ & $3.2 \times 10^{-5}$ & $1.4 \times 10^{-4}$ & -- \\
16 & 500 & \cellcolor{floorshade}$4.4 \times 10^{-16}$ & \cellcolor{floorshade}$7.8 \times 10^{-16}$ & $1.9 \times 10^{-9}$ & $1.3 \times 10^{-6}$ & $2.6 \times 10^{-5}$ & $9.1 \times 10^{-5}$ \\
24 & 750 & \cellcolor{floorshade}$6.7 \times 10^{-16}$ & \cellcolor{floorshade}$1.0 \times 10^{-15}$ & \cellcolor{floorshade}$2.3 \times 10^{-15}$ & $1.4 \times 10^{-10}$ & $2.6 \times 10^{-7}$ & $8.1 \times 10^{-6}$ \\
32 & 1{,}000 & \cellcolor{floorshade}$4.4 \times 10^{-16}$ & \cellcolor{floorshade}$7.8 \times 10^{-16}$ & \cellcolor{floorshade}$1.9 \times 10^{-15}$ & \cellcolor{floorshade}$3.2 \times 10^{-15}$ & $4.3 \times 10^{-10}$ & $3.3 \times 10^{-7}$ \\
48 & 1{,}500 & \cellcolor{floorshade}$3.3 \times 10^{-16}$ & \cellcolor{floorshade}$6.3 \times 10^{-16}$ & \cellcolor{floorshade}$1.5 \times 10^{-15}$ & \cellcolor{floorshade}$3.0 \times 10^{-15}$ & \cellcolor{floorshade}$6.4 \times 10^{-15}$ & $3.4 \times 10^{-11}$ \\
64 & 2{,}000 & \cellcolor{floorshade}$2.2 \times 10^{-16}$ & \cellcolor{floorshade}$4.5 \times 10^{-16}$ & \cellcolor{floorshade}$1.2 \times 10^{-15}$ & \cellcolor{floorshade}$2.6 \times 10^{-15}$ & \cellcolor{floorshade}$5.9 \times 10^{-15}$ & \cellcolor{floorshade}$1.1 \times 10^{-14}$ \\
\bottomrule
\end{tabular}

%% file: tables/sweep_h_pred.tex
\begin{tabular}{@{}rrcccccc@{}}
\toprule
$1/\step$ & $\nodes$ & $n=100$ & $n=200$ & $n=500$ & $n=1{,}000$ & $n=2{,}000$ & $n=4{,}000$ \\
\midrule
8 & 250 & $2.8 \times 10^{-10}$ & $1.2 \times 10^{-6}$ & $1.1 \times 10^{-4}$ & $3.3 \times 10^{-4}$ & -- & -- \\
12 & 375 & \cellcolor{floorshade}-- & $1.6 \times 10^{-11}$ & $1.2 \times 10^{-6}$ & $3.5 \times 10^{-5}$ & $1.3 \times 10^{-4}$ & -- \\
16 & 500 & \cellcolor{floorshade}-- & \cellcolor{floorshade}-- & $2.3 \times 10^{-9}$ & $1.5 \times 10^{-6}$ & $2.8 \times 10^{-5}$ & $8.3 \times 10^{-5}$ \\
24 & 750 & \cellcolor{floorshade}-- & \cellcolor{floorshade}-- & \cellcolor{floorshade}-- & $1.9 \times 10^{-10}$ & $3.1 \times 10^{-7}$ & $8.9 \times 10^{-6}$ \\
32 & 1{,}000 & \cellcolor{floorshade}-- & \cellcolor{floorshade}-- & \cellcolor{floorshade}-- & \cellcolor{floorshade}-- & $5.9 \times 10^{-10}$ & $3.8 \times 10^{-7}$ \\
48 & 1{,}500 & \cellcolor{floorshade}-- & \cellcolor{floorshade}-- & \cellcolor{floorshade}-- & \cellcolor{floorshade}-- & \cellcolor{floorshade}-- & $4.9 \times 10^{-11}$ \\
\bottomrule
\end{tabular}

%% file: refs.bib
@article{harville1973,
  author  = {Harville, David A.},
  title   = {Assigning Probabilities to the Outcomes of Multi-Entry Competitions},
  journal = {Journal of the American Statistical Association},
  volume  = {68},
  number  = {342},
  pages   = {312--316},
  year    = {1973},
  optdoi    = {10.1080/01621459.1973.10482425}
}

@book{malmuth1987,
  author    = {Malmuth, Mason},
  title     = {Gambling Theory and Other Topics},
  publisher = {Two Plus Two Publishing},
  address   = {Henderson, NV},
  year      = {1987}
}

@article{diaconis2022,
  author  = {Diaconis, Persi and Ethier, Stewart N.},
  title   = {Gambler's Ruin and the {ICM}},
  journal = {Statistical Science},
  volume  = {37},
  number  = {3},
  pages   = {289--305},
  year    = {2022},
  optdoi    = {10.1214/21-STS826}
}

@inproceedings{kim2025,
  author    = {Kim, Juho},
  title     = {Empirical Validation of the Independent Chip Model},
  booktitle = {Proceedings of the {IEEE} Conference on Games ({CoG})},
  address   = {Lisbon, Portugal},
  year      = {2025},
  note      = {arXiv:2506.00180}
}

@misc{tombos2024,
  author       = {Boshoff, Tom and Beardsell, Philippe and Inariba, Wataru},
  title        = {Theoretical Breakthroughs in {ICM}},
  year         = {2024},
  howpublished = {GTO Wizard blog},
  url          = {https://blog.gtowizard.com/theoretical-breakthroughs-in-icm/},
  note         = {Published under the byline Tombos21. Accessed September 2026}
}

@misc{hrc2020,
  author       = {{HoldemResources}},
  title        = {High Accuracy {ICM} Calculations for Large Fields},
  year         = {2020},
  howpublished = {HoldemResources.net blog},
  url          = {https://www.holdemresources.net/blog/high-accuracy-mtt-icm/},
  note         = {Accessed September 2026}
}

@misc{streib2011,
  author       = {Streib, Tysen},
  title        = {New Algorithm to Calculate {ICM} for Large Tournaments},
  year         = {2011},
  howpublished = {Two Plus Two Poker Forums, Poker Theory},
  url          = {https://forumserver.twoplustwo.com/15/poker-theory-amp-gto/new-algorithm-calculate-icm-large-tournaments-1098489/},
  note         = {Accessed September 2026}
}

@misc{wolters2026,
  author       = {Wolters, Derek},
  title        = {Log-Age Quadrature {ICM}: Fast Deterministic Dollar Equity Estimates for Poker Tournaments},
  year         = {2026},
  month        = jul,
  howpublished = {Preprint, version 1.0.0, Zenodo},
  doi          = {10.5281/zenodo.21610089},
  url          = {https://zenodo.org/records/21610089},
  note         = {Software: \url{https://github.com/curesive/log-age-quadrature-icm}}
}

@book{luce1959,
  author    = {Luce, R. Duncan},
  title     = {Individual Choice Behavior: A Theoretical Analysis},
  publisher = {Wiley},
  address   = {New York},
  year      = {1959}
}

@article{plackett1975,
  author  = {Plackett, Robin L.},
  title   = {The Analysis of Permutations},
  journal = {Journal of the Royal Statistical Society: Series C (Applied Statistics)},
  volume  = {24},
  number  = {2},
  pages   = {193--202},
  year    = {1975},
  optdoi  = {10.2307/2346567}
}

@article{yellott1977,
  author  = {Yellott, Jr., John I.},
  title   = {The Relationship between {L}uce's Choice Axiom, {T}hurstone's Theory of Comparative Judgment, and the Double Exponential Distribution},
  journal = {Journal of Mathematical Psychology},
  volume  = {15},
  number  = {2},
  pages   = {109--144},
  year    = {1977}
}

@inproceedings{kool2019,
  author    = {Kool, Wouter and van Hoof, Herke and Welling, Max},
  title     = {Stochastic Beams and Where to Find Them: The {G}umbel-Top-$k$ Trick for Sampling Sequences Without Replacement},
  booktitle = {Proceedings of the 36th International Conference on Machine Learning ({ICML})},
  series    = {Proceedings of Machine Learning Research},
  volume    = {97},
  pages     = {3499--3508},
  publisher = {PMLR},
  year      = {2019}
}

@inproceedings{oosterhuis2021,
  author    = {Oosterhuis, Harrie},
  title     = {Computationally Efficient Optimization of {P}lackett-{L}uce Ranking Models for Relevance and Fairness},
  booktitle = {Proceedings of the 44th International {ACM} {SIGIR} Conference on Research and Development in Information Retrieval},
  pages     = {1023--1032},
  publisher = {ACM},
  year      = {2021},
  optdoi    = {10.1145/3404835.3462830}
}

@inproceedings{buchholz2022,
  author    = {Buchholz, Alexander and Lichtenberg, Jan Malte and Di Benedetto, Giuseppe and Stein, Yannik and Bellini, Vito and Ruffini, Matteo},
  title     = {Low-Variance Estimation in the {P}lackett-{L}uce Model via Quasi-{M}onte {C}arlo Sampling},
  booktitle = {{SIGIR} 2022 Workshop on Reaching Efficiency in Neural Information Retrieval ({ReNeuIR})},
  year      = {2022},
  note      = {arXiv:2205.06024}
}

@inproceedings{knyazev2026,
  author    = {Knyazev, Norman and Oosterhuis, Harrie},
  title     = {Sample-Free Almost-Exact Estimation of {P}lackett-{L}uce Propensities for Off-Policy Ranking},
  booktitle = {Advances in Information Retrieval: 48th European Conference on Information Retrieval ({ECIR} 2026), Proceedings, Part I},
  series    = {Lecture Notes in Computer Science},
  volume    = {16483},
  pages     = {163--177},
  publisher = {Springer},
  year      = {2026},
  optdoi    = {10.1007/978-3-032-21289-4_11}
}

@article{mccabe1965,
  author  = {McCabe, John},
  title   = {On Serial Files with Relocatable Records},
  journal = {Operations Research},
  volume  = {13},
  number  = {4},
  pages   = {609--618},
  year    = {1965},
  optdoi  = {10.1287/opre.13.4.609}
}

@article{hendricks1972,
  author  = {Hendricks, W. J.},
  title   = {The Stationary Distribution of an Interesting {M}arkov Chain},
  journal = {Journal of Applied Probability},
  volume  = {9},
  number  = {1},
  pages   = {231--233},
  year    = {1972}
}

@article{burville1973,
  author  = {Burville, P. J. and Kingman, J. F. C.},
  title   = {On a Model for Storage and Search},
  journal = {Journal of Applied Probability},
  volume  = {10},
  number  = {3},
  pages   = {697--701},
  year    = {1973},
  optdoi  = {10.2307/3212792}
}

@article{flajolet1992,
  author  = {Flajolet, Philippe and Gardy, Dani{\`e}le and Thimonier, Lo{\"y}s},
  title   = {Birthday Paradox, Coupon Collectors, Caching Algorithms and Self-Organizing Search},
  journal = {Discrete Applied Mathematics},
  volume  = {39},
  number  = {3},
  pages   = {207--229},
  year    = {1992},
  optdoi  = {10.1016/0166-218X(92)90177-C}
}

@article{fill1996holst,
  author  = {Fill, James Allen and Holst, Lars},
  title   = {On the Distribution of Search Cost for the Move-to-Front Rule},
  journal = {Random Structures \& Algorithms},
  volume  = {8},
  number  = {3},
  pages   = {179--186},
  year    = {1996},
  optdoi  = {10.1002/(SICI)1098-2418(199605)8:3<179::AID-RSA2>3.0.CO;2-V}
}

@article{fill1996limits,
  author  = {Fill, James Allen},
  title   = {Limits and Rates of Convergence for the Distribution of Search Cost under the Move-to-Front Rule},
  journal = {Theoretical Computer Science},
  volume  = {164},
  number  = {1--2},
  pages   = {185--206},
  year    = {1996}
}

@article{jelenkovic1999,
  author  = {Jelenkovi{\'c}, Predrag R.},
  title   = {Asymptotic Approximation of the Move-to-Front Search Cost Distribution and Least-Recently Used Caching Fault Probabilities},
  journal = {The Annals of Applied Probability},
  volume  = {9},
  number  = {2},
  pages   = {430--464},
  year    = {1999},
  optdoi  = {10.1214/aoap/1029962750}
}

@misc{long2026,
  author       = {Long, Christopher D.},
  title        = {Radial Extremality for {LRU} Caching and the {F}ill--{H}olst Conjecture},
  year         = {2026},
  howpublished = {arXiv:2605.26107 [math.PR]}
}

@article{chen1997,
  author  = {Chen, Sean X. and Liu, Jun S.},
  title   = {Statistical Applications of the {P}oisson-Binomial and Conditional {B}ernoulli Distributions},
  journal = {Statistica Sinica},
  volume  = {7},
  number  = {4},
  pages   = {875--892},
  year    = {1997}
}

@article{hong2013,
  author  = {Hong, Yili},
  title   = {On Computing the Distribution Function for the {P}oisson Binomial Distribution},
  journal = {Computational Statistics \& Data Analysis},
  volume  = {59},
  pages   = {41--51},
  year    = {2013},
  optdoi  = {10.1016/j.csda.2012.10.006}
}

@article{biscarri2018,
  author  = {Biscarri, William and Zhao, Sihai Dave and Brunner, Robert J.},
  title   = {A Simple and Fast Method for Computing the {P}oisson Binomial Distribution Function},
  journal = {Computational Statistics \& Data Analysis},
  volume  = {122},
  pages   = {92--100},
  year    = {2018},
  optdoi  = {10.1016/j.csda.2018.01.007}
}

@article{peters1971,
  author  = {Peters, G. and Wilkinson, J. H.},
  title   = {Practical Problems Arising in the Solution of Polynomial Equations},
  journal = {Journal of the Institute of Mathematics and its Applications},
  volume  = {8},
  number  = {1},
  pages   = {16--35},
  year    = {1971},
  optdoi  = {10.1093/imamat/8.1.16}
}

@article{takahasi1974,
  author  = {Takahasi, Hidetosi and Mori, Masatake},
  title   = {Double Exponential Formulas for Numerical Integration},
  journal = {Publications of the Research Institute for Mathematical Sciences},
  volume  = {9},
  number  = {3},
  pages   = {721--741},
  year    = {1974},
  optdoi  = {10.2977/prims/1195192451}
}

@article{mori2001,
  author  = {Mori, Masatake and Sugihara, Masaaki},
  title   = {The Double-Exponential Transformation in Numerical Analysis},
  journal = {Journal of Computational and Applied Mathematics},
  volume  = {127},
  number  = {1--2},
  pages   = {287--296},
  year    = {2001},
  optdoi  = {10.1016/S0377-0427(00)00501-X}
}

@article{trefethen2014,
  author  = {Trefethen, Lloyd N. and Weideman, J. A. C.},
  title   = {The Exponentially Convergent Trapezoidal Rule},
  journal = {SIAM Review},
  volume  = {56},
  number  = {3},
  pages   = {385--458},
  year    = {2014},
  optdoi  = {10.1137/130932132}
}

@book{higham2002,
  author    = {Higham, Nicholas J.},
  title     = {Accuracy and Stability of Numerical Algorithms},
  edition   = {Second},
  publisher = {SIAM},
  address   = {Philadelphia},
  year      = {2002},
  optdoi    = {10.1137/1.9780898718027}
}

@inproceedings{miltersen2007,
  author    = {Miltersen, Peter Bro and S{\o}rensen, Troels Bjerre},
  title     = {A Near-Optimal Strategy for a Heads-Up No-Limit {T}exas {H}old'em Poker Tournament},
  booktitle = {Proceedings of the 6th International Joint Conference on Autonomous Agents and Multiagent Systems ({AAMAS})},
  pages     = {191:1--191:8},
  publisher = {ACM},
  year      = {2007},
  optdoi    = {10.1145/1329125.1329357}
}

@inproceedings{ganzfried2008,
  author    = {Ganzfried, Sam and Sandholm, Tuomas},
  title     = {Computing an Approximate Jam/Fold Equilibrium for 3-Player No-Limit {T}exas {H}old'em Tournaments},
  booktitle = {Proceedings of the 7th International Conference on Autonomous Agents and Multiagent Systems ({AAMAS})},
  pages     = {919--925},
  publisher = {IFAAMAS},
  year      = {2008}
}

@inproceedings{zinkevich2007,
  author    = {Zinkevich, Martin and Johanson, Michael and Bowling, Michael and Piccione, Carmelo},
  title     = {Regret Minimization in Games with Incomplete Information},
  booktitle = {Advances in Neural Information Processing Systems 20 ({NIPS} 2007)},
  pages     = {1729--1736},
  year      = {2007}
}
